\documentclass[english,11pt,oneside]{article}
\usepackage{mymacros}
\usepackage{epstopdf}
\usepackage[a4paper, margin=2.5cm]{geometry}
\usepackage{natbib}
\usepackage[margin=1.0cm]{caption}
\setcitestyle{numbers,open={[},close={]}}

\newtheorem{theorem}{Theorem}
\newtheorem{lemma}{Lemma}
\newtheorem{proposition}{Proposition}
\newtheorem{cor}{Corollary}

\newtheorem{assumption}{Assumption}
\newtheorem{remark}{Remark}

\newtheorem{definition}{Definition}

\newtheorem{property}{Property}

\makeatletter
\newcommand{\leqnomode}{\tagsleft@true}
\newcommand{\reqnomode}{\tagsleft@false}
\makeatother

\makeatletter
\let\@fnsymbol\@arabic
\makeatother

\newenvironment{proof}{\noindent \textbf{Proof.}}{\hfill $\blacksquare$}
\DeclareMathOperator*{\argmin}{arg\,min}

\setlength{\parindent}{0pt}
\setlength{\parskip}{5pt}
\linespread{1.15}
\title{Computing the channel capacity of a communication system affected by uncertain transition probabilities}
\author{Krzysztof Postek \thanks{Faculty of Industrial Engineering and Management, Technion - Israel Institute of Technology, e-mail: krzysztofp@technion.ac.il} 
\and Aharon Ben-Tal \thanks{Faculty of Industrial Engineering and Management, Technion - Israel Institute of Technology; Shenkar College, Israel, e-mail: abental@ie.technion.ac.il}}
\date{}
\begin{document}

\setlength{\parindent}{0pt}
\setlength{\parskip}{5pt}
\maketitle
\abstract{We study the problem of computing the capacity of a discrete memoryless channel under uncertainty affecting the channel law matrix, and possibly with a constraint on the average cost of the input distribution. The problem has been formulated in the literature as a max-min problem. We use the robust optimization methodology to convert the max-min problem to a standard convex optimization problem. For small-sized problems, and for many types of uncertainty, such a problem can be solved in principle using interior point methods (IPM). However, for large-scale problems, IPM are not practical. Here, we suggest an $\mathcal{O}(1/T)$ first-order algorithm based on \cite{Nemirovski2004paper} which is applied directly to the max-min problem.}

\section{Introduction} \label{sec.introduction}
Channel capacity is a fundamental notion in the field of information theory that started with the seminal paper of \cite{Shannon1948}. For a given channel, it provides an upper bound on the rate at which information can be reliably transmitted. In this paper we consider the problem of computing the capacity of a discrete memoryless channel (DMC). A DMC consists of an input alphabet of $N$ symbols, each used with probability $p_n$ determined by the user, and an output alphabet $Y$ of $M$ symbols. Given the $n$-th input symbol, the $m$-th output symbol occurs with a conditional probability $Q_{nm}$, where the matrix $Q \in \mathbb{R}^{N \times M}$ is known as the \textit{channel law matrix}. The channel capacity of a DMC is given by:
\leqnomode
\begin{equation}
\sup\limits_{p \in \Delta_N} I(p,Q) := \sup\limits_{p \in \Delta_m} \sum\limits_{n=1}^N \sum\limits_{m=1}^M p_n Q_{nm} \log \frac{Q_{nm}}{\sum\limits_{l=1}^N p_l Q_{lm}}, \label{equation.cc} \tag{CC}
\end{equation}
\reqnomode
where the maximized term is known as the \textit{average mutual information} (AMIM). The AMIM is a convex function of the input distribution $p$ and hence, at least in theory, convex optimization algorithms can be used to solve (\ref{equation.cc}). However, the performance of such algorithms can be poor and the need for other approaches emerged. One of the first algorithms is the well-known Arimoto-Blahut algorithm (see \cite{Arimoto1972,Blahut1972}). Later, more efficient algorithms have been developed to solve the capacity problem and some of its special cases. A comprehensive survey of algorithms for solving (\ref{equation.cc}) is given in \cite{Sutter2015}. This paper also suggests a first-order algorithm based on \cite{Nesterov2005} for problem (\ref{equation.cc}) with an additional linear constraint. 


In reality, channels correspond to physical devices affected by inaccuracies and/or whose parameters are measured only up to a certain accuracy. Therefore, the channel at hand may differ from its assumed version. Already in the 1950's researchers were considering the possibility that the channel is not known fully. This case, known as \textit{compound channel} case, has a rich literature, see e.g. \cite{Blackwell1959,Dobrushin1959,Wolfowitz1959}. A comprehensive survey by \cite{Lapidoth1998} summarizes these results. In this paper, the uncertain channel capacity problem is represented as:
\leqnomode
\begin{equation}
\sup\limits_{p \in \Delta_m} \inf\limits_{Q \in \mathcal{Q}} I(p,Q) , \label{equation.cc.rc} \tag{R-CC}
\end{equation}
\reqnomode
where the channel law matrix $Q$ is only known to belong to a set $\mathcal{Q}$. In today's language (\ref{equation.cc.rc}) could be called a robust optimization problem \cite{BenTal2009,BenTal2015} where $\mathcal{Q}$ is the so-called \textit{uncertainty set}. To the best of our knowledge, there are no papers in the literature on algorithms for solving (\ref{equation.cc.rc}). The main purpose of our paper is to fill this gap by designing an efficient algorithm for solving large scale instances. Contrary to \cite{Sutter2015}, we cannot use the algorithm of \cite{Nesterov2005} since the coupling term in (\ref{equation.cc.rc}) is not bilinear. Instead, we propose a first-order algorithm based on \cite{Nemirovski2004paper}. The performance of this algorithm relies on two types of operation that need to be executed in each iteration. The first of them is the computation of the gradient of the AMIM with respect to $p$ and $Q$. Secondly, at each iteration the algorithm calls an oracle consisting of a proximal minimization. We show that for our problem, an analytic expression for the minimum is obtained, or the proximal minimization reduces to solving a single-variable convex problem.


The contribution of our paper is as follows:
\begin{itemize}
\item we show that the robust counterpart (RC) of the min-max problem (\ref{equation.cc.rc}) reduces to a standard deterministic convex optimization problem. Moreover, we derive a dual form of RC. For the binary symmetric channel, this reduces to an explicit formula and for a weakly symmetric channel with Kullback-Leibler uncertainty it reduces to a single-variable problem;
\item we adapt the algorithm of \cite{Nemirovski2004paper} to (\ref{equation.cc.rc}) and derive simple conditions under which the $\mathcal{O}(1/T)$ convergence can be proved;
\item similar to \citet{Sutter2015}, we look at an extended version of problem (\ref{equation.cc.rc}) which includes an average cost constraint, and we show how to extend the algorithm to handle this case;
\item by numerical experiments, we illustrate the effect of uncertainty on the channel capacity.
\end{itemize}
The organization of the paper is as follows. In Section \ref{sec.derivations} we derive the convex formulation of the uncertain channel capacity problem and its dual. In Section~\ref{subsec.Nemirovski} we adapt the algorithm of \cite{Nemirovski2004paper} to solve the uncertain channel capacity problem for large-scale instances. Section~\ref{sec.experiments} includes three numerical experiments illustrating the impact of uncertainty on channel capacity.

\textbf{Notation.} We denote by $\Delta_{N}$ the $N$-dimensional simplex and by $\Delta_{N \times M}$ is a set of matrices where each row belongs to $\Delta_M$. $C^{1,1}(A)$ shall denote the space of differentiable functions with a Lipschitz continuous gradient over the set $A$. By $\partial g(\lambda)$ we denote the subdifferential of function $g(\cdot)$ at point $\lambda$. For a norm $\| \cdot \|$ we denote by $\| \cdot \|_*$ its dual norm.
\section{The uncertain channel capacity problem} \label{sec.derivations}
\subsection{Deriving the robust counterpart (RC)}
The RC is derived by dualizing the inner problem in the min-max (\ref{equation.cc.rc}) (Proposition~\ref{proposition.dual.counterpart}). This results in a standard maximization of a concave function over a convex set.
\begin{proposition} \label{proposition.dual.counterpart}
Let the uncertainty set $\mathcal{Q} \subset \mathbb{R}^{N \times M}$ be a compact convex set. Then, the worst-case capacity of a channel is given by the optimal value to the following problem:
\begin{align}
\max_{p,\lambda, V} & \ \sum\limits_{n=1}^N \lambda_n - \delta^\ast (V | \mathcal{Q}) \label{eq.cc.rc} \tag{P} \\
\text{\emph{s.t.}} & \ \sum\limits_{n=1}^N p_n \exp \left( \frac{\lambda_n - V_{nm}}{p_n}  \right) \leq 1, \ \forall m \nonumber \\
& \ p \geq 0  \nonumber \\
& \ \sum\limits_{n=1}^N p_n = 1, \nonumber 
\end{align}
where $\delta^\ast (V | \mathcal{Q})$ is the support function of the set $\mathcal{Q}$ evaluated at $V \in \mathbb{R}^{N \times M}$.
%
%
\end{proposition}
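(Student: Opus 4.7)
The plan is to dualize the inner minimization in (\ref{equation.cc.rc}) for each fixed $p \in \Delta_N$. Since $I(p, \cdot)$ is convex on the row-stochastic simplex and $\mathcal{Q}$ is compact convex, $\min_{Q \in \mathcal{Q}} I(p, Q)$ is a convex program, and I would dualize $\mathcal{Q}$ via Fenchel--Rockafellar duality (which supplies the $-\delta^*(V \mid \mathcal{Q})$ term) and dualize the row-stochasticity constraints $\sum_m Q_{nm} = 1$ via a Lagrangian with multipliers $\lambda_n$ (which supplies both the $\sum_n \lambda_n$ term and the exponential inequality).

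Concretely, writing $\tilde I(Q) = I(p, Q) + \delta(Q \mid Q \geq 0,\, \text{row-stochastic})$, Fenchel--Rockafellar gives
\begin{equation*}
\min_{Q \in \mathcal{Q}} I(p, Q) \;=\; \sup_V \bigl\{-\tilde I^*(-V) - \delta^*(V \mid \mathcal{Q})\bigr\},
\end{equation*}
so the main task is to compute $\tilde I^*(U) = \sup_{Q \geq 0,\, \text{row-stoch}}\{\langle U, Q\rangle - I(p, Q)\}$. Introducing multipliers $\lambda_n$ for the row-sum equalities, the inner Lagrangian problem becomes $\sup_{Q \geq 0}\{\langle U - \lambda \mathbf{1}^\top, Q\rangle - I(p, Q)\} + \sum_n \lambda_n$. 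Using the gradient $\partial I/\partial Q_{nm} = p_n \log(Q_{nm}/\bar Q_m)$ together with the positive homogeneity $I(p, tQ) = t\,I(p, Q)$ for $t > 0$, a scaling argument along rays in $Q$ reduces the finiteness question to a one-variable convex problem per column, whose minimum is $-\log \sum_l p_l \exp((U_{lm} - \lambda_l)/p_l)$. Consequently, the inner supremum equals $0$ exactly when $\sum_l p_l \exp((U_{lm} - \lambda_l)/p_l) \leq 1$ for every $m$, and $+\infty$ otherwise. Therefore
\begin{equation*}
\tilde I^*(U) \;=\; \inf_\lambda \Big\{\sum_n \lambda_n \;:\; \sum_l p_l \exp\!\Big(\tfrac{U_{lm} - \lambda_l}{p_l}\Big) \leq 1,\ \forall m\Big\}.
\end{equation*}
Substituting $U = -V$ and reflecting $\lambda \mapsto -\lambda$ converts the infimum into a supremum of $\sum_n \lambda_n$ subject to $\sum_l p_l \exp((\lambda_l - V_{lm})/p_l) \leq 1$, and absorbing the outer $\max_p$ yields precisely (\ref{eq.cc.rc}).

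The main obstacle is the rigorous justification of the two strong-duality exchanges. The Lagrangian step for the linear row-sum equalities is immediate (no qualification needed), but the Fenchel--Rockafellar step requires a constraint qualification, most naturally a Slater-type assumption that the relative interior of $\mathcal{Q}$ meets the open row-stochastic simplex. A secondary technical point is that the scaling argument and the associated perspective structure $p_n \exp(\cdot/p_n)$ must be extended by lower-semicontinuous closure at $p_n = 0$, where the $n$-th term of the exponential constraint degenerates into the linear inequality $V_{nm} \geq \lambda_n$; this ensures that unused input symbols are handled correctly and that the feasible set of (\ref{eq.cc.rc}) remains closed and convex (each term being a perspective of $\exp$, jointly convex in $(p_n, \lambda_n, V_{nm})$).
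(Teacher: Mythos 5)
Your proposal is correct and shares the paper's overall skeleton -- Fenchel duality to peel off $\delta^\ast(V\,|\,\mathcal{Q})$, followed by computing the conjugate of the mutual information over row-stochastic matrices with multipliers $\lambda_n$ for the row sums -- but the way you compute that conjugate is genuinely different. The paper changes variables to $x_{nm}=p_nQ_{nm}$, introduces an independent $q\in\Delta_M$ via the variational (Arimoto--Blahut-style) representation of the output distribution, runs a Lagrangian in $u_n$, and then has to eliminate an auxiliary variable $\mu$ from constraints of the form $\mu\geq A\exp(\mu)$ through a monotonicity argument. You instead exploit the positive homogeneity $I(p,tQ)=t\,I(p,Q)$, so that the inner supremum over the cone $Q\geq 0$ is the support function of a cone -- either $0$ or $+\infty$ -- and the finiteness condition, computed column by column via the log-sum-exp/entropy conjugacy, is exactly the constraint $\sum_n p_n\exp((\lambda_n-V_{nm})/p_n)\leq 1$. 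This immediately explains \emph{why} the conjugate is a linear objective $\sum_n\lambda_n$ over exponential constraints, and it bypasses the paper's $\mu$-elimination step entirely; the paper's route, in exchange, stays closer to the classical channel-capacity decomposition. Two technical points you raise are in fact glossed over by the paper as well: the relative-interior qualification needed for the Fenchel step (the paper simply invokes \cite{BenTal2015}), and the lower-semicontinuous closure of the perspective terms at $p_n=0$, where the constraint degenerates to $V_{nm}\geq\lambda_n$; flagging both is a genuine improvement in rigor rather than a gap.
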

\begin{proof}
We begin by deriving the dual of the inner problem in (\ref{equation.cc.rc}). Let:
$$
f(Q,p) = \sum\limits_{n=1}^N \sum\limits_{m=1}^M p_n Q_{nm} \log \frac{Q_{nm}}{\sum\limits_{l=1}^N p_l Q_{lm}}.
$$
Using the results of \cite{BenTal2015}, relying on Fenchel duality, we have:
\begin{align}
\inf\limits_{Q \in \mathcal{Q}} f(Q,p) & = \inf\limits_{Q} \left\{  f(Q,p) + \delta \left( Q | \mathcal{Q} \right)\right\} \nonumber \\
& = \inf\limits_{Q} \left\{  \delta \left( Q | \mathcal{Q} \right)  - ( - f(Q,p) ) \right\} \nonumber \\
& = \sup\limits_{V} \left\{  (-f)_\ast(V,p) - \delta^\ast \left( V | \mathcal{Q} \right) \right\}, \label{eq.Fenchel.dualized}
\end{align}
where $(-f)_\ast(V,p)$ is the concave conjugate of the function $-f$ with respect to the first variable and $\delta^\ast \left( V | \mathcal{Q} \right)$ is the support function of the set $\mathcal{Q}$, evaluated at $V \in \mathbb{R}^{N \times M}$. Therefore, we need to find:
\begin{align}
(-f)_\ast(V,p) = \inf\limits_{Q \in \Delta_{N \times M}} \sum\limits_{n=1}^N \sum\limits_{m=1}^M \left( p_n Q_{nm} \log \frac{Q_{nm}}{\sum\limits_{l=1}^N p_l Q_{lm}} + V_{nm} Q_{nm} \right) \label{eq.concave.conjugate}
\end{align}
So, let us define new variables:
$$
x_{nm} = p_n Q_{nm}, \ \forall n,m, \qquad q_m = \sum_{n=1}^N p_n Q_{nm}
$$
in terms of which
\begin{align*}
\inf\limits_{Q \in \Delta_{N,M}} \sum\limits_{n=1}^N \sum\limits_{m=1}^M \left( p_n Q_{nm} \log \frac{Q_{nm}}{\sum\limits_{l=1}^N p_l Q_{lm}} + V_{nm} Q_{nm} \right)
\end{align*}
is equivalent to
\begin{align*}
\inf\limits_{x_{nm}, q_m} & \ - \sum\limits_{n=1}^N p_n \log p_n + \sum\limits_{n=1}^N \left( \sum\limits_{m=1}^M x_{nm} \log \frac{x_{nm}}{q_m} + V_{nm} \frac{x_{nm}}{p_n} \right) \\
\text{s.t.} & \ \sum\limits_{m=1}^M x_{nm} = p_n \quad \forall n \\
& \ q \in \Delta_M \\
& \ x_{nm} \geq 0 \quad \forall n,m. \\
\end{align*}
Next, we use Lagrangian duality to compute the dual of the latter problem :
\begin{align*}
& \inf_{q \in \Delta_M, x_{nm} \geq 0} \sum\limits_{n=1}^N \sum\limits_{m=1}^M \left( x_{nm} \log \frac{x_{nm}}{q_m}  + V_{nm} Q_{nm} \right) - \sum\limits_{n=1}^N u_n \left( \sum\limits_m x_{nm} - p_n \right) \\
& =
\sum\limits_{n=1}^N u_n p_n + \inf_{q \in \Delta_n} \inf\limits_{x_{nm} \geq 0} \sum\limits_{n=1}^N \sum\limits_{m=1}^M q_m \left( \frac{x_{nm}}{q_m} \log \frac{x_{nm}}{q_m} - \frac{x_{nm}}{q_m} \left( u_n - \frac{V_{nm}}{p_n} \right)  \right) \\
& =
\sum\limits_{n=1}^N u_n p_n + \inf_{q \in \Delta_n} \sum\limits_{n=1}^N \sum\limits_{m=1}^M q_m\exp \left( u_n - \frac{V_{nm}}{p_n} - 1 \right)  \\
& =
\sum\limits_{n=1}^N u_n p_n - \max\limits_m  \sum\limits_{n=1}^N\exp \left( u_n - \frac{V_{nm}}{p_n} - 1 \right)  
\end{align*}
Introducing an auxiliary variable $\mu$ for the $\max$ term and using strong duality, we obtain:
\begin{align*}
& &(-f)_\ast(V,p)= \max_{u_n, \mu} &\ - \sum\limits_{n=1}^N p_n \log p_n + \sum\limits_{n=1}^N u_n p_n - \mu \\
&& \text{s.t.} & \ \mu \geq \sum\limits_{n=1}^N\exp \left( u_n - \frac{V_{nm}}{p_n} - 1 \right) && \forall m.
\end{align*}
We introduce auxiliary variable $\lambda_n$ such that $u_n = \log p_n + \lambda_n / p_n + \mu$ to obtain
\begin{align*}
& &(-f)_\ast(V,p)= \max_{\lambda, \mu} & \ \sum\limits_{n=1}^N \lambda_n \\
&& \text{s.t.} & \ \mu \geq \sum\limits_{n=1}^N p_n \exp \left( \frac{\lambda_n - V_{nm}}{p_n} - 1 + \mu \right) && \forall m.
\end{align*}
Each of the constraints has a form $\mu \geq A \exp(\mu)$ where $A > 0$. The property of such a constraint is:
$$
A \geq A' > 0 \quad \Rightarrow \quad \left( \mu \geq A \exp(\mu) \ \Rightarrow \ \mu \geq A' \exp(\mu) \right).
$$
Therefore, if there exists a $\mu$ satisfying one of the constraints for which the expression $\sum_{n=1}^N p_n \exp \left( (\lambda_n - V_{nm}) /p_n - 1\right)$ is smallest, this $\mu$ satisfies all of the constraints. Therefore:
\begin{align*}
\mu \geq \sum\limits_{n=1}^N p_n \exp \left( \frac{\lambda_n - V_{nm}}{p_n} - 1 + \mu \right) \ & \forall m && \Leftrightarrow && \mu_m \geq \sum\limits_{n=1}^N p_n \exp \left( \frac{\lambda_n - V_{nm}}{p_n} - 1 + \mu_m \right) & \forall m
\end{align*}
and one can eliminate $\mu_m$ from each inequality by moving the terms to the left-hand side and maximizing w.r.t. $\mu_m$. In the end, we obtain:
\begin{align*}
& &(-f)_\ast(V,p) = \max_{\lambda} & \sum\limits_{n=1}^N \lambda_n \\
&& \text{s.t.} & \sum\limits_{n=1}^N p_n \exp \left( \frac{\lambda_n - V_{nm}}{p_n}  \right) \leq 1 && \forall m.
\end{align*}
This expression is inserted back into (\ref{eq.Fenchel.dualized}), which in turn inserted into (\ref{equation.cc.rc}) results in the final form (\ref{eq.cc.rc}).
\end{proof}

Problem (\ref{eq.cc.rc}) is convex -- the objective is a concave function of the decision variables and the first constraint involves the perspectives of the (convex) exponential function. Thus if the support function $\delta^\ast(\cdot | \mathcal{Q})$ has a tractable form (see e.g. \cite{BenTal2015} for a review of computationally tractable uncertainty sets), it can be solved using convex optimization methods, e.g., the IPM. In the next section, we derive the problem dual to (\ref{eq.cc.rc}) which allows us to obtain upper bounds on the robust channel capacity.
\begin{remark}
In \cite{Sutter2015}, the deterministic channel capacity problem includes a linear average cost constraint. It models the case where use of the $n$-th symbol of the input alphabet incurs a cost $a_n$ and the aim is to keep the average cost below a threshold $b$. This gives a constraint $a^\top p \leq b$, where $a \in \mathbb{R}^{N}_{+}$ and $b \in \mathbb{R}_{+}$. In our robust setting, such a constraint is also easily included as it is sufficient to append it to problem (\ref{eq.cc.rc}).
\end{remark}
\subsection{Dual problem}
The following proposition gives the convex optimization problem dual to (\ref{eq.cc.rc}).\leqnomode
\begin{proposition} \label{proposition.dual.robust.counterpart}
A dual of problem (\ref{eq.cc.rc}) for which strong duality holds is given by:
\begin{align}
 \min_{v,Q} \ & \log \sum\limits_{m=1}^M \exp(v_m) + \max\limits_n \left\{ \sum\limits_{m=1}^M Q_{nm} (\log Q_{nm} - v_m ) \right\} \label{eq.cc.rc.dual} \tag{D} \\
\text{\emph{s.t.}} \ & Q \in \mathcal{Q}. \nonumber
\end{align}
\end{proposition}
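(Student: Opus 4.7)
The plan is to derive (D) by first swapping the two optimizations in (R-CC) via a minimax theorem, and then applying the classical Lagrangian dual of the deterministic channel capacity problem to the resulting inner supremum. In the first step, I invoke the classical fact that $I(p,Q)$ is concave in $p$ for fixed $Q$ and convex in $Q$ for fixed $p$, together with continuity of $I$ on $\Delta_N \times \mathcal{Q}$ (under the usual convention $0\log 0 = 0$); since both $\Delta_N$ and $\mathcal{Q}$ are compact convex sets, Sion's minimax theorem gives
\[
\sup_{p \in \Delta_N} \inf_{Q \in \mathcal{Q}} I(p,Q) \;=\; \inf_{Q \in \mathcal{Q}} C(Q),
\]
where $C(Q) := \sup_{p \in \Delta_N} I(p,Q)$ is the nominal (deterministic) capacity of the channel $Q$.

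In the second step, I use the classical Gallager/Topsøe dual characterization $C(Q) = \min_{q \in \Delta_M} \max_{n} \sum_m Q_{nm} \log (Q_{nm}/q_m)$. Parameterizing $q_m = \exp(v_m)/\sum_{m'}\exp(v_{m'})$ with $v \in \mathbb{R}^M$ unconstrained removes the simplex constraint and converts this into
\[
C(Q) \;=\; \min_{v \in \mathbb{R}^M}\Bigl\{\log \sum_m \exp(v_m) + \max_n \sum_m Q_{nm}(\log Q_{nm} - v_m)\Bigr\}.
\]
Inserting this expression into $\inf_{Q \in \mathcal{Q}} C(Q)$ and merging the two infima over $Q$ and $v$ yields precisely (D). Because Proposition~\ref{proposition.dual.counterpart} identifies the optimal value of (P) with $\sup_p \inf_Q I(p,Q)$, the chain of equalities above forces (D) and (P) to share their optimal value, which is the claimed strong duality.

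The main obstacle is a clean justification of the two dual identities invoked above. Sion's theorem requires verifying the concave-convex structure and continuity of $I$, which is standard but deserves a brief argument at boundary $Q$'s with zero entries; and the capacity duality $C(Q) = \min_{q \in \Delta_M} \max_n D(Q_n \| q)$ is itself nontrivial and must either be quoted from the information-theory literature or reproved in a few lines via Lagrangian duality applied to the nominal concave maximization $\max_p I(p,Q)$. An alternative route, avoiding any minimax theorem, is to dualize (P) directly: assigning multipliers $q_m \ge 0$ to the $M$ exponential constraints and $\alpha$ to the simplex equality, and then carrying out the inner suprema over $\lambda$, $V$ and $p$, produces exactly the same dual (D), with the support function $\delta^\ast(\cdot | \mathcal{Q})$ re-emerging from the $\sup_V$ step and the log-sum-exp term appearing after elimination of $\alpha$. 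This route is more arithmetically involved but self-contained, relying only on the convexity of (P) together with a Slater-type condition to obtain strong duality.
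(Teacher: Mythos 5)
Your proof is correct but follows a genuinely different route from the paper's. The paper dualizes (P) directly with the Fenchel-duality machinery of \cite{BenTal2015}: it computes the support function of the feasible set $\mathcal{D}$ (via conjugates of the exponential constraints) and the concave conjugate of the objective, then simplifies the resulting program and substitutes $u_m = \exp(v_m - \mu - 1)$ to reach (D); strong duality comes from Slater's condition for (P). You instead bypass (P) entirely: you swap $\sup_p \inf_Q$ to $\inf_Q \sup_p$ by Sion's theorem (legitimate, since $I(p,Q)$ is concave in $p$, convex in $Q$, and both sets are convex compact --- and indeed the paper itself invokes Sion--Kakutani for the same function in Section~\ref{subsec.Nemirovski}), then apply the classical redundancy--capacity duality $C(Q) = \min_{q \in \Delta_M} \max_n \sum_m Q_{nm}\log(Q_{nm}/q_m)$ and the softmax reparameterization $q_m = \exp(v_m)/\sum_{m'}\exp(v_{m'})$, which together with $\sum_m Q_{nm}=1$ produce exactly the objective of (D). Equality of optimal values then follows by chaining with Proposition~\ref{proposition.dual.counterpart}. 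Your route is shorter and more transparent, and it explains \emph{why} (D) has the form it does (it is the worst-case of the classical capacity dual); its cost is that it outsources the work to two nontrivial classical facts --- convexity of $I$ in $Q$ and the minimax characterization of $C(Q)$ --- which you correctly flag as needing citation or proof, and it establishes equality of values rather than exhibiting (D) as the literal Fenchel dual of (P). Two small points to tighten: the attainment of the minimum over $v$ requires a word when the optimal output distribution $q^*$ has zero components (the reparameterization only covers the relative interior of $\Delta_M$), and the continuity/semicontinuity of $I$ at matrices $Q$ with vanishing entries deserves the brief boundary argument you already anticipate. Neither is a gap in substance.
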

\begin{proof}
See Appendix~\ref{appendix.proofs.1}.
\end{proof}
\reqnomode

The following corollary gives an upper bound on the robust channel capacity by inserting the feasible solution $v_m = \log \sum_{l} Q_{lm}$.
\begin{cor} \label{corollary.symmetric.channel}
The following is an upper bound on the robust channel capacity (\ref{equation.cc.rc}):
\begin{align}
\min (\ref{eq.cc.rc.dual}) \leq \min_{Q \in \mathcal{Q}} \ & \log N + \max\limits_n \left\{ \sum\limits_{m=1}^M Q_{nm} \log \frac{Q_{nm}}{\sum\limits_l Q_{lm}} \right\} \label{eq.upper.bound.symmetric}
\end{align}
\end{cor}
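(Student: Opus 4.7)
The plan is to exploit the fact that, by Proposition~\ref{proposition.dual.robust.counterpart}, the minimum of (\ref{eq.cc.rc.dual}) equals the robust channel capacity (strong duality), so any feasible point of (\ref{eq.cc.rc.dual}) yields an upper bound on that value. Since there is no explicit constraint on the variable $v$, I will simply plug in the particular choice $v_m = \log \sum_{l=1}^N Q_{lm}$ for each $m$, leaving $Q$ free to range over $\mathcal{Q}$, and then minimize what remains over $Q$.

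With this substitution, two simplifications occur. First, $\exp(v_m) = \sum_{l=1}^N Q_{lm}$, and because $Q \in \mathcal{Q} \subset \Delta_{N \times M}$ has row sums equal to one,
\[
\sum_{m=1}^M \exp(v_m) \;=\; \sum_{m=1}^M \sum_{l=1}^N Q_{lm} \;=\; \sum_{l=1}^N \sum_{m=1}^M Q_{lm} \;=\; N,
\]
so the first term of the objective becomes $\log N$. Second, $\log Q_{nm} - v_m = \log\bigl(Q_{nm}/\sum_{l} Q_{lm}\bigr)$, turning the inner term into the expression displayed in (\ref{eq.upper.bound.symmetric}).

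Combining these two substitutions and minimizing over $Q \in \mathcal{Q}$ produces exactly the right-hand side of (\ref{eq.upper.bound.symmetric}); since we have only considered a subset of feasible dual variables, the true optimal value of (\ref{eq.cc.rc.dual}) can only be smaller, yielding the desired inequality. There is no real obstacle here beyond verifying the two algebraic simplifications and invoking strong duality from Proposition~\ref{proposition.dual.robust.counterpart}; the only point worth being careful about is that row-stochasticity of every $Q \in \mathcal{Q}$ is what collapses $\log \sum_m \exp(v_m)$ to the channel-law-independent constant $\log N$.
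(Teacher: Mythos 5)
Your proposal is correct and is exactly the paper's argument: the text preceding the corollary indicates that the bound is obtained "by inserting the feasible solution $v_m = \log \sum_l Q_{lm}$" into (\ref{eq.cc.rc.dual}), and your two algebraic simplifications (row-stochasticity of $Q$ collapsing the log-sum-exp term to $\log N$, and $\log Q_{nm} - v_m = \log(Q_{nm}/\sum_l Q_{lm})$) are precisely the verification the paper leaves implicit. Nothing is missing.
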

As it turns out, the upper bound of Corollary \ref{corollary.symmetric.channel} is tight for the class of \textit{weakly symmetric} channels. 
\begin{definition}
A channel is \textbf{weakly symmetric} if each row contains the same set of values as any other row, with possible permutations, and the column sums are equal.
\end{definition}
For example, the following channel is weakly symmetric:
$$
Q = \left[ \begin{array}{ccc} \frac{1}{3} & \frac{1}{6} & \frac{1}{2} \\ \frac{1}{3} & \frac{1}{2} & \frac{1}{6} \end{array} \right]
$$
and the following uncertainty set involves only weakly symmetric channels:
$$
\mathcal{Q} = \text{conv} \left( \left[ \begin{array}{ccc} \frac{1}{3} & \frac{1}{6} & \frac{1}{2} \\ \frac{1}{3} & \frac{1}{2} & \frac{1}{6} \end{array} \right] , \left[ \begin{array}{ccc} \frac{1}{3} & \frac{1}{2} & \frac{1}{6} \\ \frac{1}{3} & \frac{1}{6} & \frac{1}{2} \end{array} \right] \right).
$$
For weakly symmetric channels, it is sufficient to specify an uncertainty set for $Q$ by specifying the uncertainty set for a single row (with permuted values of the same entries in other rows) with the condition that the sums of the column entries stays equal.
\begin{proposition} \label{proposition.weakly.symmetric.bound.tight}
For weakly symmetric channels the bound (\ref{eq.upper.bound.symmetric}) is tight:
\begin{align}
\label{eq.upper.bound.symmetric.min}
\min (\ref{eq.cc.rc.dual})  = \min\limits_{Q \in \mathcal{Q}} \log N + \sum\limits_{m=1}^M Q_{1m} \log \frac{Q_{1m}}{\sum\limits_l Q_{lm}}.
\end{align}
\end{proposition}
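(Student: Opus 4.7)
The plan is to sandwich $\min(\ref{eq.cc.rc.dual})$ between two quantities that both equal the right-hand side of (\ref{eq.upper.bound.symmetric.min}). The upper inequality is just Corollary~\ref{corollary.symmetric.channel} after simplification; for the matching lower bound I would exhibit the uniform input $\bar p = (1/N,\ldots,1/N)$ as a specific feasible choice in the outer $\sup$ of (\ref{equation.cc.rc}) and evaluate its worst-case mutual information.

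First, I would simplify Corollary~\ref{corollary.symmetric.channel} under weak symmetry. Since each row of $Q$ sums to $1$ and the $M$ equal column sums therefore all equal $N/M$, we have
\[
\sum_{m=1}^M Q_{nm}\log\frac{Q_{nm}}{\sum_l Q_{lm}} = \sum_{m=1}^M Q_{nm}\log Q_{nm} - \log(N/M).
\]
Row-permutation invariance makes this quantity independent of $n$, so the $\max_n$ in (\ref{eq.upper.bound.symmetric}) is vacuous, and the corollary collapses to the RHS of (\ref{eq.upper.bound.symmetric.min}).

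Next, I would combine the strong duality of Propositions~\ref{proposition.dual.counterpart} and~\ref{proposition.dual.robust.counterpart} with the specific choice $p = \bar p$, giving
\[
\min(\ref{eq.cc.rc.dual}) \;=\; \sup_{p \in \Delta_N} \inf_{Q \in \mathcal{Q}} I(p,Q) \;\geq\; \inf_{Q \in \mathcal{Q}} I(\bar p, Q).
\]
For weakly symmetric $Q$, substituting $\sum_l \bar p_l Q_{lm} = 1/M$ and invoking row-permutation invariance once more to pull $\sum_m Q_{nm}\log Q_{nm}$ out of the average over $n$ yields $I(\bar p, Q) = \log M + \sum_m Q_{1m}\log Q_{1m}$, which matches the simplified upper bound term-for-term.

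Chaining these observations produces
\[
\min(\ref{eq.cc.rc.dual}) \;\geq\; \inf_{Q \in \mathcal{Q}} I(\bar p, Q) \;=\; \min_{Q \in \mathcal{Q}} \Bigl[\log M + \sum_{m=1}^M Q_{1m} \log Q_{1m}\Bigr] \;\geq\; \min(\ref{eq.cc.rc.dual}),
\]
forcing equality throughout. The only care needed is in tracking the two distinct uses of weak symmetry—equal column sums for the logarithm simplification, and row permutations for pulling the common row-entropy out of the average—but nothing here is a genuine obstacle; the heart of the argument is the classical fact that the uniform input is capacity-achieving for a weakly symmetric channel, repackaged in the robust setting via strong duality.
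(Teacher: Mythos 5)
Your proof is correct, and it reaches the conclusion by a noticeably lighter route than the paper. The paper also builds its argument around the uniform input, but it does so inside the transformed primal problem (\ref{eq.cc.rc}): it constructs an explicit feasible triple $p_n = 1/N$, $\lambda_n = (\log N)/N$, $V_{nm} = -(\log P_{nm})/N$, verifies the exponential constraints, and then has to evaluate $\max_{P}\inf_{Q\in\mathcal{Q}}$ of the resulting expression, which requires a minimax interchange (Sion--Kakutani) and the closed-form maximizer $P_{nm} = Q_{nm}p_n/\sum_l Q_{lm}p_l$ from \cite{BenTal1988}. You bypass all of that by invoking Proposition~\ref{proposition.dual.counterpart} and Proposition~\ref{proposition.dual.robust.counterpart} to identify $\min(\ref{eq.cc.rc.dual})$ with $\sup_{p}\inf_{Q} I(p,Q)$ and then evaluating $I(\bar p,Q)$ directly at the uniform input, where weak symmetry (equal column sums $N/M$ for the output distribution, row-permutation invariance for pulling out the common row entropy) collapses everything to $\log M + \sum_m Q_{1m}\log Q_{1m}$; the matching upper bound is Corollary~\ref{corollary.symmetric.channel} with the $\max_n$ rendered vacuous by the same invariance. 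What your version buys is brevity and the avoidance of any new duality or minimax argument; what the paper's version buys is an explicit optimal $V$ for problem (\ref{eq.cc.rc}), which your argument does not produce. Both are complete; the only hypothesis you should state explicitly is the one the paper also uses implicitly, namely that \emph{every} $Q\in\mathcal{Q}$ is weakly symmetric, so that $I(\bar p,Q)$ has the stated form uniformly over the uncertainty set and the infimum over the compact set $\mathcal{Q}$ is attained.
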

\begin{proof}
See Appendix~\ref{appendix.proofs.2}.
\end{proof}
\subsection{Special cases} \label{sec.special.cases}
As an illustration of the results, we consider two simple cases: (i) the well-known binary symmetric channel, and (ii) the weakly symmetric channel under Kullback-Leibler uncertainty about the transition probabilities in each row of $Q$.
\subsubsection{Binary symmetric channel} \label{subsection.bsc}
A binary symmetric channel is one where $m = n = 2$ and 
$$
Q = \left[ \begin{array}{cc} 1 - \beta & \beta \\ \beta & 1- \beta \end{array} \right].
$$
Clearly, this channel is characterized by a single parameter $\beta$ and hence, the only possible compact convex uncertainty set is an interval $\beta \in [\underline{\beta}, \overline{\beta}]$. Then, the following result holds:
\begin{proposition} \label{proposition.bsc}
Assume w.l.o.g. that $\underline{\beta} \leq 1/2$. Then, the robust capacity of a binary symmetric channel is given by 
\begin{align*}
\log 2 + \beta^\ast \log \beta^\ast + (1-\beta^\ast) \log (1-\beta^\ast) \text{ where } \beta^\ast = \min\{ 1/2, \overline{\beta} \}.
\end{align*}
\end{proposition}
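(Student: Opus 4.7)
The plan is to invoke Proposition~\ref{proposition.weakly.symmetric.bound.tight} to reduce the robust capacity to a single-variable convex optimization problem over $\beta$, and then identify its minimizer.

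First I would observe that for every value $\beta \in [\underline{\beta}, \overline{\beta}]$ the matrix $Q$ is weakly symmetric: both rows contain the same set of values $\{\beta, 1-\beta\}$ and each column sums to $1$. Hence the uncertainty set $\mathcal{Q}$ consists entirely of weakly symmetric channels and Proposition~\ref{proposition.weakly.symmetric.bound.tight} applies. Substituting $N = 2$ and $\sum_{l} Q_{lm} = 1$ for $m = 1, 2$ into formula (\ref{eq.upper.bound.symmetric.min}), the robust channel capacity equals
\begin{equation*}
\min_{\beta \in [\underline{\beta}, \overline{\beta}]} \Bigl\{ \log 2 + \beta \log \beta + (1-\beta) \log (1-\beta) \Bigr\}.
\end{equation*}

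Next I would analyse this one-dimensional problem. The function $g(\beta) = \beta \log \beta + (1-\beta) \log(1-\beta)$ is the negative binary entropy; it is strictly convex on $(0,1)$ with $g'(\beta) = \log \beta - \log(1-\beta)$, so its unique global minimizer on $[0,1]$ is $\beta = 1/2$, where $g(1/2) = -\log 2$. Thus $g$ is strictly decreasing on $[0, 1/2]$ and strictly increasing on $[1/2, 1]$.

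Finally I would split on two cases according to the location of $\overline{\beta}$. If $\overline{\beta} \leq 1/2$, then $[\underline{\beta},\overline{\beta}] \subseteq [0,1/2]$ and $g$ is decreasing there, so the minimizer is $\beta^\ast = \overline{\beta}$. If $\overline{\beta} > 1/2$, then since $\underline{\beta} \leq 1/2$ by assumption, the interval $[\underline{\beta},\overline{\beta}]$ contains $1/2$ and the unconstrained minimizer $\beta^\ast = 1/2$ is feasible. In both cases $\beta^\ast = \min\{1/2, \overline{\beta}\}$, yielding the stated expression. There is no real obstacle here; the only thing to be careful about is the case distinction for $\overline{\beta}$ relative to $1/2$, which the assumption $\underline{\beta}\leq 1/2$ makes unambiguous.
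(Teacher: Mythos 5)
Your proof is correct, but it follows a genuinely different route from the paper's. The paper proves Proposition~\ref{proposition.bsc} by exhibiting an explicit primal--dual certificate: it guesses the optimal variables $p^*=(1/2,1/2)$, $\lambda^*$, $V^*$ for (\ref{eq.cc.rc}) and $Q^*$, $u^*$ for (\ref{eq.cc.rc.dual}), evaluates the support function $\delta^*(V^*|\mathcal{Q})$ over the interval $[\underline{\beta},\overline{\beta}]$, and checks that the two objective values coincide, so optimality follows from weak duality. You instead observe that every matrix in the BSC uncertainty set is weakly symmetric (same multiset of row entries, unit column sums), invoke the tightness result of Proposition~\ref{proposition.weakly.symmetric.bound.tight} together with the strong duality asserted in Proposition~\ref{proposition.dual.robust.counterpart} to reduce the robust capacity to the one-dimensional convex problem $\min_{\beta\in[\underline{\beta},\overline{\beta}]}\{\log 2+\beta\log\beta+(1-\beta)\log(1-\beta)\}$, and solve it by the monotonicity of the negative binary entropy on either side of $1/2$; the case split on $\overline{\beta}$ versus $1/2$ is handled correctly and yields $\beta^*=\min\{1/2,\overline{\beta}\}$. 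Your argument is shorter and avoids having to guess $V^*$ or compute the support function, at the cost of leaning on the general machinery of Propositions~\ref{proposition.dual.robust.counterpart} and \ref{proposition.weakly.symmetric.bound.tight} (both established earlier, so there is no circularity); the paper's verification is more self-contained and has the side benefit of producing the optimal primal variables explicitly, in particular the optimal input distribution $p^*=(1/2,1/2)$, which your derivation recovers only implicitly through the proof of Proposition~\ref{proposition.weakly.symmetric.bound.tight}.
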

\begin{proof}
See Appendix~\ref{appendix.proofs.3}.
\end{proof}
\subsubsection{Weakly symmetric channels under Kullback-Leibler (KL) uncertainty}
Weak symmetry allows us to focus on a single row, so let us define $r_m = Q_{1m}$ and $\sum_{l=1}^N Q_{lm} = N/M$ for all $m$. Under KL uncertainty, the uncertainty set for the first row is given as:
$$
\mathcal{Q} = \left\{ r \in \mathbb{R}^M_{+}: \ 1^\top r = 1, \  \sum\limits_{m=1}^M r_m \log \left( \frac{r_m}{q_m} \right) \leq \rho \right\}.
$$ 
In this setting, the dual problem (\ref{eq.cc.rc.dual}) reduces to the following one:
\begin{align*}
\min\limits_{r_n} \ & \log N + \sum\limits_{m=1}^M r_m \log \frac{r_m}{N/M}  && & \min\limits_{r_n} \ & \log M + \sum\limits_{m=1}^M  r_m \log r_m \\
\text{s.t.} \ & \sum\limits_{m=1}^M r_m \log \left( \frac{r_m}{q_m} \right) \leq \rho & \Leftrightarrow &&  \text{s.t.} \ & \sum\limits_{m=1}^M r_m \log \left( \frac{r_m}{q_m} \right) \leq \rho \\
& \sum\limits_{m=1}^M r_m = 1 & &&  &\sum\limits_{m=1}^M r_m = 1  \\
& r_m \geq 0 && &&  r_m \geq 0.
\end{align*}
Omitting the first term $\log M$, we write the Lagrangian and minimize w.r.t. $r_m$ to obtain the dual function:
\begin{align*}
g(\mu,\lambda) \ & = \inf\limits_{r_m \geq 0} \sum\limits_{m=1}^M  r_m \log r_m + \lambda \left( \sum\limits_{m=1}^M r_m \log \left( \frac{r_m}{q_m} \right) - \rho \right) + \mu \left(  \sum\limits_{m=1}^M r_m - 1 \right) \\
 = & \inf\limits_{r_m \geq 0} \sum\limits_{m=1}^M (1+\lambda) r_m \log r_m - (\lambda \log q_m - \mu) r_m  - \lambda \rho - \mu\\
& \text{ which by optimizing w.r.t. } r_m \text{ gives} \\
 = & - \sum\limits_{m=1}^M (1+\lambda) \exp \left( \frac{\lambda \log q_m - \mu}{1 + \lambda} - 1 \right) - \lambda \rho - \mu, \\
\end{align*}
where $\lambda \geq 0$. To solve the dual problem $\max_{\mu, \lambda \geq 0} g(\mu,\lambda)$ we maximize first w.r.t. $\mu$ to obtain the optimality condition:
$$
\frac{d g(\mu,\lambda)}{d \mu} = 0 \ \Leftrightarrow \ - \sum\limits_{m=1}^M \exp \left( \frac{\lambda \log q_m - \mu}{1 + \lambda} - 1 \right) + 1 = 0 \ \Leftrightarrow \ \mu = (1 + \lambda) \log \left(  \sum\limits_{m=1}^M \exp \left( \frac{\lambda \log q_m }{1 + \lambda} - 1 \right) \right),
$$
which, after inserting it back, gives the following maximization problem over a single variable $\lambda \geq 0$:
\begin{align*}
\max\limits_{\lambda \geq 0} \ & \left\{ - \sum\limits_{m=1}^M (1+\lambda) \exp \left( \frac{ \lambda \log q_m - (1 + \lambda) \log \left(  \sum\limits_{m=1}^M \exp \left( \frac{ \lambda \log q_m }{1 + \lambda} -1 \right) \right)}{1 + \lambda} \right) \right. \\
& \left. - \lambda \rho - (1 + \lambda) \log \left(  \sum\limits_{m=1}^M \exp \left( \frac{ \lambda \log q_m }{1 + \lambda} - 1 \right) \right) \right\} . \nonumber
\end{align*}

\section{A proximal algorithm for solving the robust channel capacity problem (\ref{equation.cc.rc})}  \label{subsec.Nemirovski}
\subsection{Introduction} \label{sec.Nemirovski.introduction}
In principle, the robust channel capacity problem of Proposition~\ref{proposition.dual.counterpart} can be solved using the interior point methods for a wide range of convex and compact uncertainty sets $\mathcal{Q}$. However, the numerical performance of these methods is poor already on medium-sized instances which is in line with the experience already gained in the deterministic case \cite{Sutter2015}. This is particularly due to the presence of the variable matrix $V \in \mathbb{R}^{N \times M}$, which greatly increases the computational effort since the IPMs in each iteration require solving a linear system of size corresponding to the number of variables. This situation calls for using first-order methods. 

It turns out that in a suitable uncertainty setting, the original max-min problem (\ref{equation.cc.rc}) can be solved using the prox-algorithm of \cite{Nemirovski2004paper}. The algorithm has a convergence rate of $\mathcal{O}(1/T)$ and its performance relies on two types of operation that need to be executed in each iteration: (i) computing the gradient of the AMIM with respect to $p$ and $Q$; (ii) calling an oracle which performs a proximal minimization step. For the max-min problem (\ref{equation.cc.rc}) the partial derivatives of the AMIM with respect to $p$ and $Q$ are easily computed. As for the proximal minimization, an \emph{analytic expression for the minimum is obtained, or the proximal minimization reduces to solving a single-variable convex problem}.

In Appendix~\ref{appendix.Nemirovski} we state a self-contained description of the algorithm of \cite{Nemirovski2004paper}. Here, we first present our modelling setup and show that it satisfies the crucial assumptions for proving the convergence of the algorithm. In Section~\ref{subsec.prox.algorithm} we outline the algorithm and state its convergence. In Section~\ref{subsec.avg.cost.constraint} we extend the algorithm to the case with the average cost constraint added.

We propose a rather general modeling setup where the uncertainty set for the channel law matrix $Q$ is given as:
$$
\mathcal{Q} = \left\{Q(\xi): \ Q(\xi) = Q^0 + \sum\limits_{s=1}^S \xi_s Q^s, \ \mathbf{\xi} \in \mathcal{B} \right\},
$$
where $\mathcal{B} \subset \mathbb{R}^S$ is a `simple' compact convex set of perturbations $\xi$. By `simple' it is meant that it is easy to perform the proximal minimization step. Examples of such simple $\mathcal{B}$ are: $\infty$-norm ball, 2-norm ball, and simplex. For the first two cases, we add to the description of the $\mathcal{Q}$ constraints $Q^s \mathbf{1} = \mathbf{0}$ for all $s = 1,\ldots,S$, necessary for the row sums of $Q(\xi)$ to stay equal to $1$ for all $\xi \in \mathcal{B}$.

In this setup, the problem to solve is:
\begin{align} \label{equation.saddle.crude}
\max\limits_{p \in \Delta_N} \min\limits_{\xi \in \mathcal{B}} \phi(\xi,p) = \min\limits_{\xi \in \mathcal{B} } \max\limits_{p \in \Delta_N} \phi(\xi,p),
\end{align}
where
$$
\phi(\xi,p) = \sum\limits_{n=1}^N \sum\limits_{m=1}^M p_n Q_{nm}(\xi) \log \frac{Q_{nm}(\xi)}{\sum\limits_{l=1}^N p_l Q_{lm}(\xi)}
$$
and where the fact that the max-min $=$ min-max holds due to the Sion-Kakutani minimax theorem. We set $z = (\xi,p)$ to shorten the notation and define the gradient mapping
$$
\Phi(z) = \left[ \begin{array}{c} \nabla_\xi \phi(z) \\ - \nabla_p \phi(z) \end{array} \right].
$$
The three crucial assumptions for the algorithm's convergence are: (i) compactness of the domain, (ii) Lipschitz continuity of $\Phi(z)$, (iii) existence of strongly convex distance-generating functions that it is easy to minimize them over the domain.

The compactness assumption (i) is satisfied due to the compactness of $\mathcal{B}$ and $\Delta_N$. Assumption (ii) can be easily met due to the following result.
\begin{proposition} \label{proposition.Nemirovski.L.continuity}
If for some $\tau > 0$ it holds that $Q_{nm} > \tau$ for all $Q \in \mathcal{Q}$, then the mapping:
$$
\Psi(Q,p) = \left[ \begin{array}{c} \nabla_Q f(Q,p) \\ \nabla_p f(Q,p)  \end{array} \right]
$$
is Lipschitz continuous on $\mathcal{Q} \times \Delta_N$. \hfill $\square$
\end{proposition}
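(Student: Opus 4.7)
The plan is to show that $\Psi$ extends to a $C^1$ mapping on an open convex neighborhood of the compact convex set $\mathcal{Q}\times\Delta_N$ with a Jacobian that is uniformly bounded on that set; Lipschitz continuity of $\Psi$ then follows from the mean value inequality applied along line segments in $\mathcal{Q}\times\Delta_N$ (which is convex, being a product of convex sets). To get started, I would compute the partial derivatives of $f$ explicitly: writing $q_m(p,Q) := \sum_{l=1}^N p_l Q_{lm}$, a direct differentiation gives
$$
\frac{\partial f}{\partial Q_{nm}}(Q,p) = p_n \log \frac{Q_{nm}}{q_m(p,Q)}, \qquad \frac{\partial f}{\partial p_n}(Q,p) = \sum_{m=1}^M Q_{nm} \log \frac{Q_{nm}}{q_m(p,Q)} - 1,
$$
so $\Psi$ is smooth wherever $Q_{nm}>0$ and $q_m>0$.

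Next, I would verify that on $\mathcal{Q}\times\Delta_N$ both arguments of the logarithm are bounded away from zero. The hypothesis gives $Q_{nm} > \tau$ for every $Q \in \mathcal{Q}$, and since $p \in \Delta_N$ we obtain $q_m(p,Q) \geq \tau \sum_{l} p_l = \tau$. Consequently one can pick an open convex neighborhood $\mathcal{U}\supset\mathcal{Q}\times\Delta_N$ on which $Q_{nm}\geq \tau/2$ and $q_m\geq \tau/2$, and $f$ extends to a $C^\infty$ function on $\mathcal{U}$.

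The main work is then to bound the Hessian $\nabla^2 f$ on $\mathcal{Q}\times\Delta_N$. Differentiating the expressions above once more produces only rational combinations of $p_n$, $Q_{nm}$, the reciprocals $1/Q_{nm}$ and $1/q_m$, and the logarithms $\log Q_{nm}$, $\log q_m$. Each of these is continuous on $\mathcal{U}$, and by compactness of $\mathcal{Q}\times\Delta_N$ each is bounded there, so there exists a finite constant $L$ with $\|\nabla^2 f(Q,p)\|\le L$ for all $(Q,p)\in\mathcal{Q}\times\Delta_N$. Integrating $\nabla^2 f$ along the segment joining any two points of $\mathcal{Q}\times\Delta_N$ then yields $\|\Psi(Q,p)-\Psi(Q',p')\|\le L\,\|(Q,p)-(Q',p')\|$, which is the asserted Lipschitz continuity.

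The only step that requires care is the third one, since the denominators $q_m^2$, $q_m^3$ appearing in second derivatives of $\log q_m$ could in principle blow up; however, once the lower bound $q_m\geq\tau$ has been established uniformly on $\mathcal{Q}\times\Delta_N$, all such terms are automatically controlled, and the argument reduces to a routine compactness estimate rather than any subtle analytic construction.
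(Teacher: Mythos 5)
Your proposal is correct and follows essentially the same route as the paper: both arguments reduce Lipschitz continuity of $\Psi$ to boundedness of the Hessian of $f$ on $\mathcal{Q}\times\Delta_N$, with the key estimate being that $q_m=\sum_l p_l Q_{lm}\ge\tau$ inherits the lower bound from the entries of $Q$. The only difference is presentational — the paper writes out each second partial derivative with an explicit bound in terms of $\tau$, while you obtain the same boundedness by a compactness/continuity argument — and this does not change the substance of the proof.
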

\begin{proof}
See Appendix~\ref{appendix.proofs.4}.
\end{proof}
\begin{remark}
Note the extra assumption $\tau > 0$. A similar assumption is made in the deterministic channel case of \cite{Sutter2015}, in order to force the set of dual solutions of a certain problem to be compact, on which the convergence of their algorithm rests.
\end{remark}
The third assumption (iii) is the existence of a distance-generating function $\omega: \mathbb{R}^n \times \mathbb{R}_+^m \rightarrow \mathbb{R}$ such that the computation of the proximal operator:
$$
\text{Prox}_{z}(d) = \argmin\limits_{w \in \mathcal{B} \times \Delta_N} \left[ \omega(w) + \left\langle w, d  - \omega'(z)  \right\rangle  \right].
$$
is simple. In line with \cite{Nemirovski2004paper}, we use
\begin{align} \label{equation.dgf}
\omega(z) = \gamma_1 \omega_1(\xi)  + \gamma_2 \omega_2(p)
\end{align}
where $\omega_1(\cdot)$ and $\omega_2(\cdot)$ are strongly convex distance-generating functions over $\mathcal{B}$ and $\Delta_N$ and $\gamma_1$, $\gamma_2$ are suitably chosen positive constants. Possible choices for $\omega_1(\cdot)$ and $\omega_2(\cdot)$ are:
\begin{itemize}
\item $\omega_1(\cdot)$: 
\item[] for $\mathcal{B}$ -- $\infty$-norm ball or $2$-norm ball:
$$
\omega_1(\xi) = \frac{\xi^\top \xi}{2},
$$
\item[] for $\mathcal{B}$ -- simplex:
$$
\omega_1(\xi) = \sum\limits_{s=1}^S (\xi_s  + \delta / S) \log (\xi_s  + \delta / S),
$$
where $\delta > 0 $ is an arbitrarily positive number.
\item $\omega_2(\cdot)$:
$$
\omega_2(p) = \sum\limits_{n=1}^N (p_n  + \delta / N) \log (p_n  + \delta / N).
$$
\end{itemize}
With such a choice, the computation of the prox-operator is equivalent to:
\begin{align}\label{eq.Nemirovski.subproblem.solution}
& \min\limits_{w \in \mathcal{B} \times \Delta_N} \left[ \omega(w) + \left\langle w, d  - \omega'(z)  \right\rangle  \right] \\
&= \min\limits_{\xi \in \mathcal{B}, p \in \Delta_N} \left[ \gamma_1 \omega_1(\xi)  + \gamma_2 \omega_2(p) + \left\langle \left[ \begin{array}{c} \xi \\ p \end{array} \right], \left[ \begin{array}{c} d_1 \\ d_2 \end{array} \right]  - \left[ \begin{array}{c} \gamma_1  \omega'_1(z) \\ \gamma_2 \omega'_2(z) \end{array} \right]  \right\rangle  \right], \nonumber
\end{align}
and can be done easily. Indeed, one can minimize (\ref{eq.Nemirovski.subproblem.solution}) over $p$ and $\xi$ separately to obtain:
\begin{align} \label{eq.simplex.projection}
p_n = \max \left\{ 0, \exp \left( -1 - \mu - \frac{d_{2n} - \gamma_2 \omega'_{2n}(z)}{\gamma_2} \right) - \frac{\delta}{N} \right\}
\end{align}
where $\mu$ is the solution to the single-variable equation:
$$
\sum\limits_{n=1}^N \max \left\{ 0, \exp \left( -1 - \mu - \frac{d_{2n} - \gamma_2 \omega'_{2n}(z)}{\gamma_2} \right) - \frac{\delta}{N} \right\} = 1.
$$
For minimization over $\xi$ we have the following three cases corresponding to the different choices for $\omega_1(\cdot)$:
\begin{itemize}
\item $\mathcal{B}$ -- $\infty$-norm ball:
$$
\xi_k = \left\{ \begin{array}{ll} 
- 1 & \text{ if } \frac{ \gamma_1 \omega'_{1k}(z) - d_{1k} }{\gamma_1} < -1 \\
\frac{ \gamma_1 \omega'_{1k}(z) - d_{1k} }{\gamma_1} & \text{ if } -1 \leq \frac{ \gamma_1 \omega'_{1k}(z) - d_{1k} }{\gamma_1} \leq 1 \\
1 & \text{ if } \frac{ \gamma_1 \omega'_{1k}(z) - d_{1k} }{\gamma_1} > 1,
\end{array} \right.
$$
\item $\mathcal{B}$ -- 2-norm ball:
$$
\xi = \left\{ \begin{array}{ll} 
\frac{ \gamma_1 \omega'_{1}(z) - d_{1} }{\gamma_1} & \text{ if } \left\| \frac{ \gamma_1 \omega'_{1}(z) - d_1 }{\gamma_1} \right\|_2 \leq 1 \\
\frac{\gamma_1 \omega'_{1}(z) - d_{1}}{\left\| \gamma_1 \omega'_{1}(z) - d_1 \right\|_2} & \text{ otherwise,}
\end{array} \right.
$$
\item $\mathcal{B}$ -- simplex is analogous to (\ref{eq.simplex.projection}).
\end{itemize}
\subsection{The algorithm and its convergence} \label{subsec.prox.algorithm}
Now, we present the scheme of the algorithm. Each general step is illustrated with the specific case where $\mathcal{B}$ is the $\infty$-norm ball, $\omega_1(\xi) = \xi^\top \xi / 2$ and $\omega_2(p) = \sum\limits_{n=1}^N (p_n  + \delta / N) \log (p_n  + \delta / N)$.
\begin{enumerate}
\item Initialize with a point $(\xi^0,p^0 ) = z^0 \in Z = \mathcal{B} \times \Delta_N$. 
\item Given $z^{t-1}$, check if
$$
z^{t-1} = \text{Prox}_{z^{t-1}} (\Phi(z^{t-1})).
$$
\emph{In our example, this reduces to solving the problem:
\begin{align*}
w &= \argmin\limits_{\xi \in \mathcal{B}, p \in \Delta_N} \left[ \gamma_1 \frac{\xi^\top \xi}{2}  + \gamma_2 \sum\limits_{n=1}^N (p_n + \delta / N)\log (p_n + \delta / N) \right. \\ 
& \left. + \left\langle \left[ \begin{array}{c} \xi \\ p \end{array} \right], \Phi(z^{t-1})  - \left[ \begin{array}{c} \gamma_1  \xi \\ \gamma_2 (1 + \log (p + \delta / N)) \end{array} \right]  \right\rangle  \right]
\end{align*}
and checking if $w = z^{t-1}$.}

If equality holds, the algorithm stops and $z^{t-1}$ is the saddle point in (\ref{equation.saddle.crude}). Otherwise, go to Step 3.
\item Choose a $\gamma^t > 0$, set $w^{t,0} := z^{t-1}$ and do the inner iteration (over $k = 1,2,\ldots$):
\begin{align} \label{equation.inner.iteration}
w^{t,k} = \text{Prox}_{z^{t-1}} (\gamma^t \Phi(w^{t,k-1})).
\end{align}
until for the first time the following inequality holds:
\begin{align}
\left\langle \gamma^t \Phi(w^{t,k-1}), w^{t,k-1} - w^{t,k} \right\rangle + \omega(z^{t-1}) + \left\langle \omega'(z^{t-1}), w^{t,k} - z^{t-1} \right\rangle - \omega( w^{t,k-1} ) \leq 0. \label{equation.inner.iteration.condition}
\end{align}
\emph{In our example, this means solving the problem:
\begin{align*}
w &= \argmin\limits_{\xi \in \mathcal{B}, p \in \Delta_N} \left[ \gamma_1 \frac{\xi^\top \xi}{2}  + \gamma_2 \sum\limits_{n=1}^N (p_n + \delta / N)\log (p_n + \delta / N) \right. \\ 
& \left. + \left\langle \left[ \begin{array}{c} \xi \\ p \end{array} \right], \gamma^t \Phi(w^{t,k-1})  - \left[ \begin{array}{c} \gamma_1  \xi^{t,k-1} \\ \gamma_2 (1 + \log (p^{t,k-1} + \delta / N)) \end{array} \right]  \right\rangle  \right]
\end{align*}
until it holds that:
\begin{align*}
& \left\langle \gamma^t \Phi(w^{t,k-1}), w^{t,k-1} - w^{t,k} \right\rangle + \gamma_1 (\xi^{t-1})^\top \xi^{t-1} + \gamma_2 \sum\limits_{n=1}^N (p_n^{t-1} + \delta /N) \log (p_n^{t-1} + \delta /N) \\
& + \left\langle z^{t-1}, w^{t,k} - z^{t-1} \right\rangle - \gamma_1 (\xi^{t,k-1})^\top \xi^{t,k-1} - \gamma_2 \sum\limits_{n=1}^N (p_n^{t,k-1} + \delta /N) \log (p_n^{t,k-1} + \delta /N) \leq 0.
\end{align*}}
When the inner iteration is finished and condition (\ref{equation.inner.iteration.condition}) is satisfied, run the following updates:
\begin{align*}
w^t &= w^{t,k-1} \\
z^{t} &= w^{t,k} \\
y^t & =  \frac{\sum_{l=0}^t \gamma^l w^l}{\sum_{l=1}^t \gamma^l},
\end{align*}
where $y^t$ is the approximate solution to (\ref{equation.saddle.crude}), and go back to Step 1.
\end{enumerate}
The algorithm's complexity is stated in the following proposition based on Proposition 2.2 in  \cite{Nemirovski2004paper}.
\begin{proposition} \label{proposition.convergence}
Let $\alpha_1, \alpha_2 > 0$ be the strong convexity parameters of $\omega_1(\xi)$, $\omega_2(p)$, $\| \cdot \|_{(1)}$ and $\| \cdot \|_{(2)}$ be two norms and $L_{uv} > 0$, $u,v \in \{1,2\}$ such that:
$$
L_{uv} \geq \frac{\left\| \nabla_{z_u} \phi(z) - \nabla_{z_u} \phi(z') \right\|_{(u)*}}{\left\| z_v - z'_v \right\|_{(v)}}, \forall z, z' \in \mathcal{B} \times \Delta_N, \ z_v \neq z'_v, \ z_u = z'_u,
$$
and
$$
\Theta_1 \geq \sup\limits_{\xi, \xi' \in \mathcal{B}} \{\omega_1(\xi) - \omega_1(\xi') - \left\langle \xi - \xi', \omega_1'(\xi') \right\rangle \}, \quad \Theta_2 \geq \sup\limits_{p,p' \in \Delta_N} \{\omega_2(p) - \omega_2(p') - \left\langle p - p', \omega_2'(p') \right\rangle \},
$$
and $\gamma_1$, $\gamma_2$ in (\ref{equation.dgf}) be equal to:
$$
\gamma_1 = \frac{ \sum\limits_{l=1}^2 L_{1l} \sqrt{\frac{\Theta_1 \Theta_l}{\alpha_1 \alpha_l}} }{\Theta_1 \sum\limits_{k=1}^2 \sum\limits_{l=1}^2 L_{kl} \sqrt{\frac{\Theta_k \Theta_l}{\alpha_k \alpha_l}}}, \quad \gamma_2 = \frac{ \sum\limits_{l=1}^2 L_{2l} \sqrt{\frac{\Theta_2 \Theta_l}{\alpha_2 \alpha_l}} }{\Theta_2 \sum\limits_{k=1}^2 \sum\limits_{l=1}^2 L_{kl} \sqrt{\frac{\Theta_k \Theta_l}{\alpha_k \alpha_l}}},
$$
and $\gamma_t$ satisfy the following condition for all iterations $t$:
$$
\gamma^t \leq \frac{1}{\sqrt{2}  \sum\limits_{k=1}^2 \sum\limits_{l=1}^2 L_{kl} \sqrt{\frac{\Theta_k \Theta_l}{\alpha_k \alpha_l}} },
$$
Under these assumptions, in each iteration $t$ there are at most two inner iterations (\ref{equation.inner.iteration}) and the number $T$ of iterations to obtain an $\epsilon$-approximation of the solution to (\ref{equation.saddle.crude}) is
$$
\mathcal{O} \left( \left( \sum\limits_{k=1}^2 \sum\limits_{l=1}^2 L_{kl} \sqrt{\frac{\Theta_k \Theta_l}{\alpha_k \alpha_l}} \right) / \epsilon \right).
$$ \hfill $\square$
\end{proposition}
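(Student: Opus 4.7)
The plan is to reduce the statement to Proposition 2.2 of \cite{Nemirovski2004paper} by verifying its hypotheses for our composite setup and then optimizing the weights $\gamma_1,\gamma_2$ in the distance-generating function (\ref{equation.dgf}). First, I would equip the product space $\mathbb{R}^S\times\mathbb{R}^N$ with a composite norm $\|z\|^2 = a_1\|\xi\|_{(1)}^2 + a_2\|p\|_{(2)}^2$ and its associated dual, where $a_1,a_2>0$ are chosen so that $\omega(z)=\gamma_1\omega_1(\xi)+\gamma_2\omega_2(p)$ is $1$-strongly convex in that norm. Since $\omega_i$ is $\alpha_i$-strongly convex in $\|\cdot\|_{(i)}$, this forces $a_i=\gamma_i\alpha_i$. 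The Bregman divergence of $\omega$ is additive over the block structure, so the diameter/radius parameter of $Z=\mathcal{B}\times\Delta_N$ with respect to $\omega$ is $\Omega^2=\gamma_1\Theta_1+\gamma_2\Theta_2$.

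Next I would bound the Lipschitz constant $L$ of the monotone mapping $\Phi$ in the composite dual norm. Splitting $\Phi(z)-\Phi(z')$ blockwise and applying the four partial Lipschitz bounds $L_{kl}$ together with the definition of the dual of a weighted norm gives
\begin{align*}
L \;\le\; \sum_{k,l=1}^{2} \frac{L_{kl}}{\sqrt{a_k a_l}} \;=\; \sum_{k,l=1}^{2} \frac{L_{kl}}{\sqrt{\gamma_k\gamma_l\,\alpha_k\alpha_l}}.
\end{align*}
Proposition 2.2 of \cite{Nemirovski2004paper} then asserts that for any stepsize $\gamma^t\le 1/(\sqrt{2}L)$ each outer iteration requires at most two inner prox calls (which is exactly the stepsize bound quoted in the proposition, once $L$ is substituted), and that the averaged iterate $y^T$ satisfies a duality gap of order $L\Omega^2/T$. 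The remaining hypotheses of that template are already in place: compactness of $Z$ is condition~(i); Lipschitz continuity of $\Phi$ (hence of $\Psi$, from which $\Phi$ inherits its constants through the affine map $\xi\mapsto Q(\xi)$) is provided by Proposition~\ref{proposition.Nemirovski.L.continuity}; and tractability of the prox-oracle was verified in (\ref{eq.Nemirovski.subproblem.solution}) and the displays that follow.

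It remains to choose $(\gamma_1,\gamma_2)$ to minimize the product $L\Omega^2$. Writing this product out and differentiating with respect to $\gamma_u$ (or, equivalently, introducing the normalization $\gamma_1+\gamma_2=1$ which is harmless because the stepsize absorbs scaling), the first-order condition yields the stated formulas
\begin{align*}
\gamma_u \;\propto\; \frac{1}{\Theta_u}\sum_{l=1}^{2} L_{ul}\sqrt{\tfrac{\Theta_u\Theta_l}{\alpha_u\alpha_l}}, \qquad u=1,2,
\end{align*}
with the normalization given in the proposition. Substituting these optimal weights back into $L\Omega^2$, both factors collapse to multiples of $S:=\sum_{k,l=1}^{2} L_{kl}\sqrt{\Theta_k\Theta_l/(\alpha_k\alpha_l)}$, so that $L\Omega^2 = \mathcal{O}(S)$. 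The resulting iteration complexity $T=\mathcal{O}(S/\epsilon)$ is exactly the stated bound.

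The main obstacle is the weighted-norm bookkeeping: one has to pair the block norms with their duals consistently so that $L$ genuinely scales as $1/\sqrt{a_k a_l}$, and then verify that the algebraic optimization over $(\gamma_1,\gamma_2)$ reproduces the formulas in the proposition rather than merely yielding the right order. Once these two pieces of bookkeeping are done, the convergence statement and the "at most two inner iterations" clause follow verbatim from \cite[Prop.~2.2]{Nemirovski2004paper}.
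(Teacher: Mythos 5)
Your proposal is correct and follows essentially the same route as the paper, which gives no explicit proof of this proposition but simply invokes Proposition~2.2 of \cite{Nemirovski2004paper} after verifying its hypotheses (compactness, Lipschitz continuity of $\Phi$ via Proposition~\ref{proposition.Nemirovski.L.continuity}, and prox tractability) and recording the same constants in Appendix~\ref{appendix.Nemirovski}. Your write-up in fact supplies more detail than the paper does --- the composite norm with weights $a_i=\gamma_i\alpha_i$, the blockwise bound $L\le\sum_{k,l}L_{kl}/\sqrt{\gamma_k\gamma_l\alpha_k\alpha_l}$, and the normalization $\gamma_1\Theta_1+\gamma_2\Theta_2=1$ --- all of which is consistent with Nemirovski's construction and yields the claimed $\mathcal{O}(S/\epsilon)$ bound.
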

The cost of each iteration is $\mathcal{O}(NMS)$ -- it consists of computing the derivatives of $\phi$ w.r.t. $\xi$ and $p$ and computing the proximal operator at most a fixed number of times. The total complexity depends on the dependence of the terms $\Theta_1$, $\alpha_1$ on $S$. For example, if $\mathcal{B}$ is chosen to be an $S$-dimensional Euclidean ball, the complexity is:
$$
\mathcal{O} \left(  NMS \log N / \epsilon \right).
$$
\begin{remark}
In the experiments, we observe a strong dependence of the convergence speed on the parameters $\gamma^t$. The observed convergence is faster if $\gamma^t$ is larger. However, a small $\gamma^t$ is needed to guarantee that the number of inner iterations is at most 2. For that reason, we implemented an update rule multiplying $\gamma^t$ by 1.5 if the number of inner iterations was at most 2, and dividing it by 1.5 otherwise. This heuristic performed well in our computational study.
\end{remark}
\subsection{Average cost constraint} \label{subsec.avg.cost.constraint}
A practically relevant extension of (\ref{equation.cc}) is the situation where there is a constraint on the average cost of the input distribution \cite{Sutter2015}. In such case, the feasible set of $p$ is $\overline{\Delta}_N = \{p \in \mathbb{R}^N_{+}: \ 1^\top p = 1, \ a^\top b \leq b \}$ and the worst-case channel capacity is equal to:
\begin{align}
\min\limits_{Q \in \mathcal{Q}} \max\limits_{p \in \overline{\Delta}_N } \sum\limits_{n = 1}^N \sum\limits_{m = 1}^M p_n Q_{nm} \log \frac{Q_{nm}}{\sum\limits_{l=1}^N p_l Q_{lm}} \label{eq.avg.cost}
\end{align}
Our goal is to solve (\ref{eq.avg.cost}) efficiently with the same machinery as the basic algorithm. Hence, we want to transform (\ref{eq.avg.cost}) to a saddle point problem where the maximization and minimization are done over compact sets for which it is still easy to compute the proximal operators. To do this, we make the following mild assumption and recall a well-known property of the average mutual information.
\begin{assumption} \label{assumption.strict.feasibility}
The inner problem in (\ref{eq.avg.cost}) is strictly feasible and its holds that $\min_n a_n < b$ and $\max_n a_n > b$.\hfill $\square$
\end{assumption}
\begin{property}[See \cite{Cover2012}] \label{property.AMIM.bounds}
It holds that:
$$
0 \leq \sum\limits_{n = 1}^N \sum\limits_{m = 1}^M p_n Q_{nm} \log \frac{Q_{nm}}{\sum\limits_{l=1}^N p_l Q_{lm}} \leq \min \left\{ \log N, \log M \right\}.
$$
\end{property}
By dualizing the inner problem in (\ref{eq.avg.cost}) w.r.t. the average cost constraint, we obtain the following problem, which by Assumption~\ref{assumption.strict.feasibility} and strong duality is equivalent to (\ref{eq.avg.cost}):
\begin{align}
\inf\limits_{Q \in \mathcal{Q}} \inf\limits_{\lambda \geq 0} \sup\limits_{p \in \Delta_N} \sum\limits_{n = 1}^N \sum\limits_{m = 1}^M p_n Q_{nm} \log \frac{Q_{nm}}{\sum\limits_{l=1}^N p_l Q_{lm}} + \lambda \left( b - \sum\limits_{n=1}^N p_n a_n \right) \label{eq.avg.cost.saddle.formulation}
\end{align}
The expression in (\ref{eq.avg.cost.saddle.formulation}) is convex in $(Q,\lambda)$ and concave in $p$. However, the domain for $\lambda$ is unbounded whereas the use of the algorithm of \cite{Nemirovski2004paper} requires compactness of the domain in both $p$ and $(Q,\lambda)$. A way to ensure it is to find a $\Lambda$ such that $\lambda^* \leq \Lambda$. To find such a $\Lambda$, we fix some $Q \in \mathcal{Q}$ and consider the maximized function of $\lambda$ in (\ref{eq.avg.cost.saddle.formulation}): 
\begin{align}
\label{eq.g.function.def}
g(\lambda) = \sup\limits_{p \in \Delta_N} \sum\limits_{n=1}^N \sum\limits_{m=1}^M p_n Q_{nm} \log \frac{Q_{nm}}{\sum\limits_{l=1}^N p_l Q_{lm}} + \lambda \left( b - \sum\limits_{n=1}^N p_n a_n \right).
\end{align}
First, we will show that $g(\lambda)$ has a shape similar to the one in the left panel of Figure \ref{fig.g.lambda} (convex and diverging to $+\infty$ as $\lambda \rightarrow +\infty$). This will imply that there exists a point $\Lambda > 0$ such that $\partial g(\Lambda)$ contains a positive number. Thus, $g(\lambda)$ will be increasing for $\lambda > \Lambda$ and the minimizer $\lambda^*$ satisfies $\lambda^* \leq \Lambda$.
\begin{figure}
\centering
\captionsetup{font=scriptsize}
\caption{\textbf{Left panel:} Illustration of the general structure of the function $g(\lambda)$ of the saddle-point algorithm in Section~\ref{subsec.avg.cost.constraint}. The function is increasing to the right from $\lambda = 3.5$ because it is convex and the derivative at $\lambda = 3.5$ is positive. \textbf{Right panel:} Evolution of $p_1$, and $\xi$ over the algorithm iterations in the experiment of Section~\ref{subsec.BSC.experiment}.}
\label{fig.g.lambda}
\includegraphics[width=\textwidth]{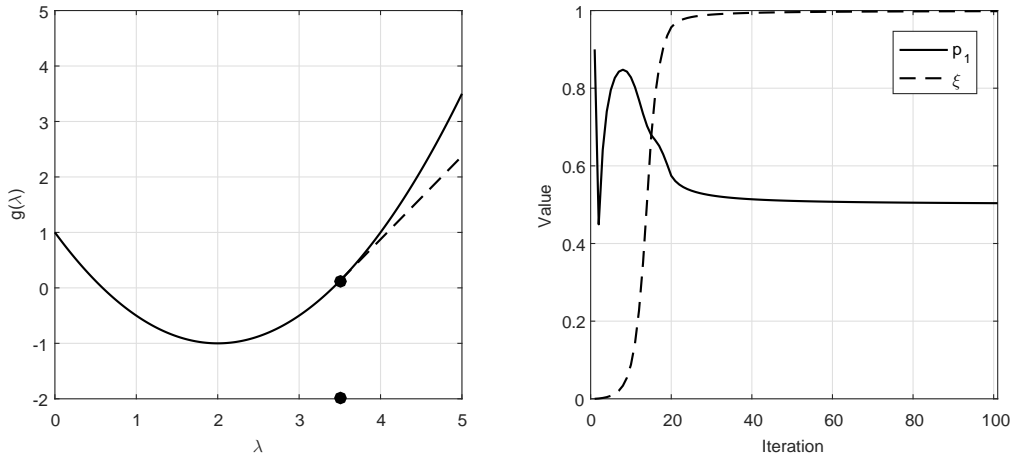}
\end{figure}

The function $g(\lambda)$ is convex as being a pointwise supremum of convex functions. For each $\lambda$ the supremum in (\ref{eq.g.function.def}) is attained because the maximized function of $p$ is continuous and $\Delta_N$ is compact. Next, we have the following lemma.
\begin{lemma} \label{lemma.g.lambda.infty}
Under Assumption~\ref{assumption.strict.feasibility} it holds that
$$
\lim\limits_{\lambda \rightarrow + \infty} g(\lambda)  = + \infty.
$$
\end{lemma}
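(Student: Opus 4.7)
The idea is simply to exhibit one feasible $p$ that makes the penalty term strictly positive while keeping the AMIM finite (in fact zero), giving a lower bound on $g(\lambda)$ that tends to $+\infty$.

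By Assumption~\ref{assumption.strict.feasibility} we have $\min_n a_n < b$, so there exists an index $n^\star$ with $a_{n^\star} < b$. The plan is to lower-bound the supremum in~(\ref{eq.g.function.def}) by plugging in the vertex $p = e_{n^\star} \in \Delta_N$ (the standard basis vector). For this choice, $\sum_{l=1}^N p_l Q_{lm} = Q_{n^\star m}$ for every $m$, and the only nonzero contribution in the AMIM double sum comes from $n = n^\star$, giving
\begin{equation*}
\sum_{n=1}^N \sum_{m=1}^M p_n Q_{nm} \log \frac{Q_{nm}}{\sum_l p_l Q_{lm}} \;=\; \sum_{m=1}^M Q_{n^\star m} \log \frac{Q_{n^\star m}}{Q_{n^\star m}} \;=\; 0,
\end{equation*}
where we use the standard convention $0\log 0 = 0$ for any rows with $Q_{n^\star m}=0$. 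At the same time the penalty term becomes $\lambda(b - a_{n^\star})$.

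Combining these, for every $\lambda \geq 0$ the supremum defining $g(\lambda)$ dominates the value at $p = e_{n^\star}$, so
\begin{equation*}
g(\lambda) \;\geq\; 0 + \lambda (b - a_{n^\star}).
\end{equation*}
Since $b - a_{n^\star} > 0$, letting $\lambda \to +\infty$ yields $g(\lambda) \to +\infty$, which is the claim.

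There is really no hard step here — the only thing to be careful about is justifying that $p = e_{n^\star}$ is admissible in the supremum (it lies in $\Delta_N$, independently of any cost constraint, because that constraint has already been dualized) and that the AMIM evaluates to $0$ at a vertex of the simplex, which is the standard fact that a deterministic input distribution transmits no information (this also matches the lower bound in Property~\ref{property.AMIM.bounds}). No continuity or compactness argument is needed beyond the definition of $g(\lambda)$.
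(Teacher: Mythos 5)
Your proof is correct and follows essentially the same route as the paper's: both lower-bound $g(\lambda)$ by its value at the vertex of $\Delta_N$ selecting a symbol with $a_{n^\star}<b$ (the paper takes the argmin of $a_n$), observe the mutual-information term is bounded below by $0$, and let the linear term $\lambda(b-a_{n^\star})$ diverge. The only cosmetic difference is that you compute the AMIM to be exactly $0$ at a vertex, whereas the paper merely invokes its nonnegativity (Property~\ref{property.AMIM.bounds}); either suffices.
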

\begin{proof}
See Appendix~\ref{appendix.proofs.5}.
\end{proof}

We know that $\min_{\lambda \geq 0} g(\lambda)$ is attained either at $\lambda^* = 0$ or at a $\lambda^* > 0$ such that $0 \in \partial g(\lambda^*)$. To find $\Lambda$ such that $g(\lambda)$ is increasing on $[\Lambda, + \infty)$ we have the following proposition.
\begin{proposition} \label{proposition.g.lambda.infty}
Under Assumption~\ref{assumption.strict.feasibility} and for $\Lambda > \log N / (b - \min_n a_n) $, there exists a $d > 0$ such that:
$$
d \in \partial g \left( \Lambda \right).
$$
\end{proposition}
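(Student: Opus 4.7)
The plan is to use a Danskin-type identification of $\partial g(\Lambda)$ together with two simple sandwich bounds on $g(\Lambda)$: one from below via a specific feasible $p$, and one from above via the uniform bound on the AMIM.

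First I would recall that for each fixed $p$, the expression inside the supremum in (\ref{eq.g.function.def}) is affine in $\lambda$ with slope $b - \sum_n p_n a_n$. Thus $g(\lambda)$ is a pointwise supremum of affine functions, and by the standard subdifferential formula for such suprema, whenever the argmax set
\[
P^\ast(\lambda) \;=\; \argmax_{p \in \Delta_N} \Bigl\{ \text{AMIM}(p,Q) + \lambda\bigl(b - \textstyle\sum_n p_n a_n\bigr)\Bigr\}
\]
is nonempty (which it is, by compactness of $\Delta_N$ and continuity of the objective, as already noted before the lemma), every $p^\ast \in P^\ast(\Lambda)$ yields a subgradient $b - \sum_n p^\ast_n a_n \in \partial g(\Lambda)$. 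So it suffices to show that for $\Lambda > \log N / (b - \min_n a_n)$ \emph{some} optimizer $p^\ast$ satisfies $\sum_n p^\ast_n a_n < b$.

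Next I would establish the two bounds. For the lower bound, let $n_0 \in \argmin_n a_n$ and plug in $p = e_{n_0}$ (the unit vector). Since then $\sum_l p_l Q_{lm} = Q_{n_0 m}$, the AMIM term vanishes, and we get
\[
g(\Lambda) \;\geq\; \Lambda\bigl(b - a_{n_0}\bigr) \;=\; \Lambda\bigl(b - \min_n a_n\bigr),
\]
which is positive by Assumption~\ref{assumption.strict.feasibility}. For the upper bound, fix any $p^\ast \in P^\ast(\Lambda)$ and use Property~\ref{property.AMIM.bounds} to bound the AMIM by $\log N$:
\[
g(\Lambda) \;=\; \text{AMIM}(p^\ast,Q) + \Lambda\bigl(b - \textstyle\sum_n p^\ast_n a_n\bigr) \;\leq\; \log N + \Lambda\bigl(b - \textstyle\sum_n p^\ast_n a_n\bigr).
\]

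Combining both bounds gives
\[
\Lambda\bigl(b - \textstyle\sum_n p^\ast_n a_n\bigr) \;\geq\; \Lambda\bigl(b - \min_n a_n\bigr) - \log N,
\]
and dividing by $\Lambda > 0$ yields
\[
b - \textstyle\sum_n p^\ast_n a_n \;\geq\; \bigl(b - \min_n a_n\bigr) - \frac{\log N}{\Lambda}.
\]
By the assumed lower bound on $\Lambda$, the right-hand side is strictly positive, so $d := b - \sum_n p^\ast_n a_n$ is a positive element of $\partial g(\Lambda)$, completing the proof.

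I do not anticipate a real obstacle here; the only mildly delicate point is justifying that $\partial g(\Lambda)$ contains $b - \sum_n p^\ast_n a_n$ for an arbitrary maximizer, but this is exactly the conclusion of Danskin's theorem (or equivalently the subdifferential calculus for pointwise suprema of affine functions) under the compactness/continuity already noted in the paragraph just before the proposition.
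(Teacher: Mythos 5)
Your proof is correct and uses exactly the same ingredients as the paper's: the identification of $b - \sum_n p^*_n a_n$ as a subgradient at a maximizer, the lower bound on $g(\Lambda)$ via the point mass at $\arg\min_n a_n$, and the upper bound $\log N$ on the mutual information from Property~\ref{property.AMIM.bounds}. The only difference is presentational — the paper argues by contradiction while you combine the two bounds directly, which in fact yields the marginally stronger conclusion that \emph{every} maximizer $p^*$ gives a strictly positive subgradient $d \geq (b - \min_n a_n) - \log N/\Lambda$.
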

\begin{proof}
See Appendix~\ref{appendix.proofs.6}.
\end{proof}

Since $d>0$ belongs to the subdifferential $\partial g(\Lambda)$ we obtain that for $t > 0$:
$$
g(\Lambda + t) \geq g(\Lambda) + td > g(\Lambda). 
$$
Hence, it holds that $\lambda^* \leq \Lambda = \log N / (b - \min_n a_n)$ and this bound is independent of $Q$. In this setup, the algorithm of \cite{Nemirovski2004paper} can be used to solve (\ref{eq.avg.cost.saddle.formulation}) -- it is only necessary to define a strongly convex distance generating function $\omega_1(\cdot)$ for the joint vector $(\xi,\lambda)$ over the set $\mathcal{B} \times [0, \Lambda]$. For example, when $\mathcal{B}$ is a $\infty$-norm box, such a function is $\omega_1(\xi,\lambda) = \xi^\top \xi /2 + \lambda^2 / 2$.
\section{Numerical experiments} \label{sec.experiments}
\subsection{Binary symmetric channel} \label{subsec.BSC.experiment}
Consider the binary symmetric channel as in Section~\ref{subsection.bsc}. We assume that the error probability belongs to an interval $\beta \in [0.15,0.45]$. In the setup of our algorithm this corresponds to the channel law matrix:
$$
Q(\xi) = Q_0 + \xi Q^1, \  Q_0 = \left[ \begin{array}{cc} 0.7 & 0.3 \\ 0.3 & 0.7  \end{array} \right],  Q^1 = \left[ \begin{array}{cc} -0.15 & 0.15 \\ 0.15 & -0.15  \end{array} \right], \ \xi \in [-1,1].
$$
We run the proximal algorithm with starting point $p^0 = (0.9, 0.1)^\top$ and $\xi^0 = 0$. By Proposition~\ref{proposition.bsc} we know that the saddle point corresponds to $p = (0.5, 0.5)^\top$ and $\xi = 1$. The right panel of Figure~\ref{fig.g.lambda} illustrates the evolution of the decision variables over the first 100 iterations.
\subsection{Impact of uncertainty} \label{subsubsec.impact.uncertainty}
The next example illustrates the impact of uncertainty on channel capacity. Consider the deterministic channel to be a randomly generated one, with a nominal channel law matrix $Q_0$ with $N = M = 50$ simulated randomly as:
$$
Q^0_{nm} = \frac{W_{nm}^4}{\sum\limits_{l=1}^M W_{nl}^4}, \quad  W_{nm} \sim \text{U}([1, 6.7]) \ \forall n,m.
$$
This way of sampling has two purposes: (i) keeping all the entries of $Q^0_{nm} \geq 10^{-5}$, (ii) reducing the number of `dominant' entries per row by using the 4-th power. 

We define the uncertainty as follows:
$$
Q(\xi) = Q^0 + \sum\limits_{s=1}^S \xi_s Q^{s}_{nm}, \quad \xi \in \mathcal{B},
$$
where
$$
Q^{s_n}_{nm} = \Gamma \left( \frac{1}{M} - Q^0_{nm} \right) \quad \mathcal{B} = \{\xi \in \mathbb{R}^S: \ 0 \leq \xi_s \leq 1, \ \forall s, \ \left\| \xi  \right\|_2 \leq 1 \},
$$
where $0 \leq \Gamma \leq 1$ and $s_n$, $n = 1,\ldots,N$ are sampled independently and uniformly from $\{1,2,\ldots,S\}$. In words, there are $S$ underlying primitive uncertainties $\xi_s$ and the ambiguity setup allows the perturbed channel to approach the channel in which every row of the channel law matrix is uniformly distributed, and the maximum size of perturbation is controlled by $\Gamma$. The Euclidean norm of the perturbation vector $\xi$ is restricted to be no greater than $1$, so that not all components $\xi_s$ can attain value 1 simultaneously. In the numerical setup we take $S = 5$.

Additionally, we consider the version of this problem with additional average cost constraint $a^\top p \leq b$. We set $b = 1$ and the $n$-component of $a$ is given by:
$$
a_n = \left\{ \begin{array}{cc} 50 & \text{ if } \overline{p}_n \geq 0.05 \\
 0 & \text{ otherwise,}  \end{array} \right.
$$
where $\overline{p}$ is the optimal solution to the problem without the constraint. In this way, we impose a very high penalty on the 10 input symbols whose optimal probability in the problem without the linear constraint was larger than or equal to 0.05.

The left panel in Figure~\ref{fig.uncertain.ellipsoid} illustrates the change in the channel capacity for the problem without and with the average cost constraint. We observe that the capacity of the channel drops significantly as the uncertainty scaling parameter $\Gamma$ increases. We see also that the capacity of the channel with the average cost constraint is strictly smaller than of the channel without such a constraint for all $\Gamma$.

\begin{figure}
\centering
\captionsetup{font=scriptsize}
\caption{\textbf{Left panel:} Worst-case channel capacities as a function of $\Gamma$ in the example of Section~\ref{subsubsec.impact.uncertainty} -- without and with the average cost constraint. \textbf{Right panel:} Worst-case channel capacities and the capacity for the deterministic channel (with $\xi = 0$) in the example of Section~\ref{subsubsec.single.uncertainty}. Capacities computed up to 0.01 accuracy.}
\label{fig.uncertain.ellipsoid}
\includegraphics[width=0.9\textwidth]{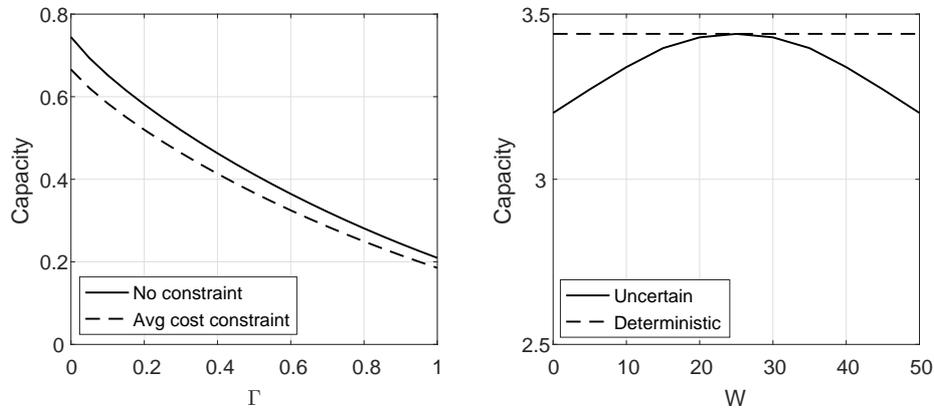}
\end{figure}
\subsection{Channel affected by a single uncertainty} \label{subsubsec.single.uncertainty}
Next we consider an experiment such that the channel is affected by a single large source of uncertainty and the idiosyncratic uncertainties of each row are negligible. 

We consider an uncertain channel with $N = M > 4$ where:
$$
Q(\xi) = Q^0 + \xi Q^p,
$$
where $\xi \in [-1,1]$ and the matrices are defined as:
{\scriptsize
$$
Q^0 = \left[ \begin{array}{ccccccc} 
0.92 - (m-5) \tau & 0.02 & 0.02 & 0.02 & 0.02 & \tau & \cdots \\
0.02 & 0.92 - (m-5) \tau & 0.02 & 0.02 & 0.02 & \tau & \cdots \\
0.02 & 0.02 &  0.92 - (m-5) \tau & 0.02 & 0.02  & \tau & \cdots \\
\\
\tau & \cdots & 0.02 & 0.02 &  0.92 - (m-5) \tau & 0.02 & 0.02  \\
\cdots & \tau & 0.02 & 0.02 & 0.02 & 0.92 - (m-5) \tau & 0.02 \\
\cdots & \tau & 0.02 & 0.02 & 0.02 & 0.02 &  0.92 - (m-5) \tau 
 \end{array} \right]
$$}
where $ \tau = 10^{-5}$ and the perturbation matrix is
{\scriptsize
$$
Q^p = \left[ \begin{array}{ccccccc} 
0.07 w_1 & 0.0175 w_1 & 0.0175 w_1 & 0.0175 w_1 & 0.0175 w_1 & 0 & \cdots \\
0.0175 w_2 & 0.07 w_2 & 0.0175 w_2 & 0.0175 w_2 & 0.0175 w_2 & 0 & \cdots \\
0.0175 w_3 & 0.0175 w_3 &  0.07 w_3 & 0.0175 w_3 & 0.0175 w_3  & 0 & \cdots \\
\\
\cdots & 0 & 0.0175 w_{m-2} & 0.0175 w_{m-2} &  0.07 w_{m-2} & 0.0175 w_{m-2} & 0.0175 w_{m-2}  \\
\cdots & 0 & 0.0175 w_{m-1} & 0.0175 w_{m-1} & 0.0175 w_{m-1} & 0.07 w_{m-1} & 0.0175 w_{m-1} \\
\cdots & 0 & 0.0175 w_{m} & 0.0175 w_{m} & 0.0175 w_{m} & 0.0175 w_{m} &  0.07 w_{m}
 \end{array} \right].
$$}
The nominal matrix $Q_0$ reflects the possible events following the choice of the $n$-th input symbol:
\begin{itemize}
\item the output symbol is the $n$-th output symbol with probability $0.92 - (N-5) \tau$;
\item the output symbol is one of the 4 nearest neighbors of the $n$-th output symbol, each with probability 0.02;
\item other output symbol occurs, each with a small probability $\tau$.
\end{itemize}
Perturbation of the channel takes away (or adds) the probability mass from the $n$-th symbol adding it to (taking it away from) the probability masses of these four neighboring symbols. We set the vector $w$ to be parametrized by integer $0 \leq W \leq m$ as follows:
$$
w = \left[ \underbrace{\begin{array}{ccc} -1 &\cdots & -1 \end{array}}_{W} \ \underbrace{\begin{array}{ccc} 1 &\cdots & 1 \end{array}}_{M - W} \right]^\top.
$$
In this way, the first $W$ rows of the channel transition matrix are affected by uncertainty $\xi$ in a way that is opposite to the remaining $M-W$ rows. We set $M = N = 50$. We run the algorithm up to accuracy 0.005. The right panel of Figure~\ref{fig.uncertain.ellipsoid} illustrates the results for different values of $W$.

As one can see, the figure is symmetric because the problem is symmetric around $W = 25$. For the extreme values of $W$, the capacity loss due to uncertainty amounts to $7\%$. We observe that for $W = 25$ the worst-case channel capacity is equal to the capacity of a channel defined by $Q_0$. This is because the worst-case realization of $\xi$ at the saddle point is exactly $\xi = 0$.
\section{Summary} \label{sec.summary}
In this paper we have derived the convex robust counterparts of the channel capacity problem affected by uncertainty in the channel law matrix, and presented a first-order saddle point algorithm with $\mathcal{O}(1/T)$ convergence rate that potentially can handle large-scale problems. Our numerical examples illustrate the fact that presence of uncertainty can lead to a decrease in channel capacity. 
\bibliographystyle{plainnat}

\begin{appendix}
\section{Proofs} \label{appendix.proofs}
\subsection{Proposition~\ref{proposition.dual.robust.counterpart}} \label{appendix.proofs.1}
\begin{proof}
Similar to the proof of Proposition 1, we use the results of \cite{BenTal2015} based on Fenchel duality. Denote the feasible set for $(t,p,V)$ by $\mathcal{D}$ and define
$$
g(t,p,V) = \sum\limits_{n=1}^N \lambda_n - \delta^\ast (V | \mathcal{Q})
$$
Then, we have:
\begin{align}
&\max_{\lambda_n, V_{nm}} \sum\limits_{n=1}^N \lambda_n - \delta^\ast (V | \mathcal{Q}) - \delta((t,p,V)|\mathcal{D}) \nonumber \\
= \ & \min\limits_{a,b,C} \delta^\ast((a,b,C)|\mathcal{D}) - g_\ast(a,b,C) \label{eq.dual.Fenchel.dual}
\end{align}
We first need to derive the support function $\delta^\ast((a,b,C)|\mathcal{D})$. We use the fact that $\mathcal{D}$ is described by the following convex functions:
\begin{align*}
h_m(\lambda,p,V) & \ = \sum\limits_{n=1}^N p_n \exp \left( \frac{\lambda_n - V_{nm}}{p_n}  \right) - 1 && m = 1,\ldots, M\\
h_{M+1}(\lambda,p,V) & \ = 1^\top p - 1 \\
h_{M+2}(\lambda,p,V) & \ = -1^\top p + 1.
\end{align*}
It holds that \cite{BenTal2015}:
\begin{align*}
\delta^\ast((a,b,C)|\mathcal{D}) = \inf\limits_{\substack{a^m,b^m, C^m, u_m \geq 0: \sum\limits_{m=1}^{M+2} a^m = a, \\ \sum\limits_{m=1}^{M+2} b^m = b, \\ \sum\limits_{m=1}^{M+2} C^m = C, }} & \left\{   \sum\limits_{m=1}^M u_m h_m^* \left( \frac{a^m}{u_m},\frac{b^m}{u_m},\frac{C^m}{u_m} \right) + u_{M+1} h_{M+1}^* \left( \frac{a^{M+1}}{u_{M+1}}, \frac{b^{M+1}}{u_{M+1}}, \frac{C^{M+1}}{u_{M+1}} \right) \right. \\
& + \left. u_{M+2} h_{M+2}^* \left( \frac{a^{M+2}}{u_{M+2}},\frac{b^{M+2}}{u_{M+2}},\frac{C^{M+2}}{u_{M+2}} \right) \right\}
\end{align*}
where $h_m^*(\lambda,p,V)$, $m =1,\ldots,M+2$ are the convex conjugates of the defining functions given by:
\begin{align*}
h_m^\ast(a,b,C) & \ = \left\{ \begin{array}{ll} 1 & \text{ if } \sum_{n=1}^N b_n + a_n \log a_n - a_n \leq 0 \ \forall j, \ a_n = - C_{nm} \ \forall j, \ C_{nl} = 0, \forall l \neq m \\
+\infty & \text{ otherwise.}
\end{array} \right. \\
h_{M+1}^\ast(a,b,C) & \ = \left\{ \begin{array}{ll} 1 & \text{ if } b \leq 1,\ a = 0, \ C = 0 \\
+\infty & \text{ otherwise.}
\end{array} \right. \\
h_{M+2}^\ast(a,b,C) & \ = \left\{ \begin{array}{ll} -1 & \text{ if } b \leq -1,\ a = 0, \ C = 0  \\
+\infty & \text{ otherwise.}
\end{array} \right. 
\end{align*}
Now, we need to derive the concave conjugate $g_\ast(a,b,C)$. Define:
\begin{align*}
g_1(\lambda,p,V) & \ =  \sum\limits_{n=1}^N \lambda_n \\
g_2(\lambda,p,V) & \ =  - \delta^\ast (V | \mathcal{Q}).
\end{align*}
Again, we know that \cite{BenTal2015}:
\begin{align*}
g_\ast(a,b,C) = \sup\limits_{\substack{a^1,b^1,C^1, a^2,b^2,C^2: \\ a^1 + a^2 = a,\ b^1 + b^2 = b,\ C^1 + C^2 = C}} \left\{g_{1\ast}(a^1,b^1,C^1) + g_{2\ast}(a^2,b^2,C^2) \right\}
\end{align*}
where:
\begin{align*}
g_{1\ast}(a,b,C) & \ = \left\{ \begin{array}{ll} 0 & \text{ if } a = 1, b = 0, \ C = 0  \\
-\infty & \text{ otherwise.}
\end{array} \right. \\
g_{2\ast}(a,b,C) & \ = \left\{ \begin{array}{ll} 0 & \text{ if } a = 0, b = 0, \ -C \in \mathcal{Q} \\
-\infty & \text{ otherwise.}
\end{array} \right. 
\end{align*}
Combining the two results -- for the support function and the concave conjugate -- we obtain the following problem equivalent to (\ref{eq.dual.Fenchel.dual}):
\begin{align*}
 \min_{a^m, b^m, C^m} \ & \sum_{m = 1}^M u_m + u_{M+1} - u_{M+2} \\
\text{s.t.} \ & \frac{b_n^m}{u_m} + \frac{a_n^m}{u_m} \log \frac{a_n^m}{u_m} - \frac{a_n^m}{u_m} \leq 0 && \forall n,m \\
& \frac{a_n^m}{u_m} = \frac{- C_{nm}^m}{u_m}  && \forall n,m \leq M \\
& C_{nl}^m = 0, &&\forall l \neq m \\
& u_m \geq 0, && \forall m \\
& \sum\limits_{m=1}^{M+2} a^m = 1, \ \sum\limits_{m=1}^{M+2} b^m = 0 \\
& \frac{b^{M+1}}{u_{M+1}} \leq 1, \ \frac{b^{M+2}}{u_{M+2}} \leq -1 \\
& C^{M+1} = C^{M+2} = 0, \ a^{M+1} = a^{M+2} = 0 \\
& - \sum\limits_{m=1}^{M+2} C^m \in \mathcal{Q}.
\end{align*}
To simplify, we multiply the first constraint by $u_m$, notice that $C^{M+1} = C^{M+2} = 0$, multiply the constraints in the 6-th row by $u_{M+1}$ and $u_{M+2}$, respectively, and substitute $a^m_n = Q_{nm}$ to obtain:
\begin{align}
 \min_{Q_{nm}, b^m} \ & \sum_{m=1}^M u_m + u_{M+1} - u_{M+2} \label{eq.appendix.support.problem} \\
\text{s.t.} \ & b_n^m + Q_{nm} \log \frac{Q_{nm}}{u_m} - Q_{nm} \leq 0 && \forall m \leq M, \ \forall n \nonumber \\
& u_m \geq 0, &&\forall m \nonumber \\
& b^{M+1} \leq u_{M+1} 1, \ b^{M+2} \leq -u_{M+2} 1 \nonumber \\
& \sum\limits_{m=1}^M b^m + b^{M+1} + b^{M+2} = 0 \nonumber \\
& Q \in \mathcal{Q}. \nonumber
\end{align}
Now we notice that:
$$
u_{M+1} - u_{M+2} \geq \max\limits_n \left\{ b^{M+1}_n + b^{M+2}_n \right\} \geq \max\limits_n \sum\limits_{m =1}^M - b^m_n \geq \max\limits_n \left\{ \sum\limits_{m=1}^M Q_{nm} \left( \log \frac{Q_{nm}}{u_m} - 1 \right) \right\},
$$
where the subsequent inequalities follow from the constraints in the 3rd, 4th and 1st rows of (\ref{eq.appendix.support.problem}), respectively. Thanks to this, we obtain the following problem equivalent to (\ref{eq.appendix.support.problem}):
\begin{align*}
\min_{Q_{nm}, u_m} \ & \sum_{m=1}^M u_m + \max\limits_n \left\{ \sum\limits_{m=1}^M Q_{nm} \log \frac{Q_{nm}}{u_m} - 1 \right\}
\end{align*}
The final formulation is obtained by substituting $u_m = \exp(v_m - \mu - 1)$ and minimizing w.r.t. $\mu$. Strong duality holds for the pair (\ref{eq.cc.rc}) - (\ref{eq.cc.rc.dual}) due to the fact that  Slater condition clearly holds for problem (\ref{eq.cc.rc}).
\end{proof}
\subsection{Proposition~\ref{proposition.weakly.symmetric.bound.tight}} \label{appendix.proofs.2}
\begin{proof}
It is enough to find values for the variables $\lambda$, $p$, $V$ in the primal problem (\ref{eq.cc.rc}) such that the objective in the primal problem attains the value in (\ref{eq.upper.bound.symmetric.min}). First, take $p_n = 1/N$ and $\lambda_n = (\log N) / N$ to obtain the following form of the problem:
\begin{align*}
\max_{V} & \log N - \delta^\ast (V | \mathcal{Q}) \\
\text{s.t.} & \sum\limits_{n=1}^N  \exp \left( - N V_{nm}  \right) \leq 1, \ \forall m.
\end{align*}
Set $V_{nm} = - (\log P_{nm})/N$ where $P^\top \in \Delta_{M \times N}$ (each column of $P$ belongs to a simplex). Then, the problem's constraints hold and for the objective we have:
\begin{align*}
\max_{P^\top \in \Delta_{n \times m}} \log N - \sup_{Q \in \mathcal{Q}} \sum_{n=1}^N \sum_{m=1}^M - \frac{1}{N} Q_{nm}(\log P_{nm}) \ & = \max_{P^\top \in \Delta_{n \times m}} \inf_{Q \in \mathcal{Q}} \log N + \sum_{m=1}^M \frac{1}{N} Q_{nm} \log P_{nm} \\
& = \max_{P^\top \in \Delta_{n \times m}} \inf_{Q \in \mathcal{Q}} \sum_{n=1}^N \sum_{m=1}^M \frac{1}{N} Q_{nm} \log \frac{P_{nm}}{1/N} \\
& = \inf_{Q \in \mathcal{Q}} \max_{P^\top \in \Delta_{n \times m}}  \sum_{n=1}^N \sum_{m=1}^M \frac{1}{N} Q_{nm} \log \frac{P_{nm}}{1/N} \\
& = \inf_{Q \in \mathcal{Q}} \sum_{n=1}^N \sum_{m=1}^M \frac{1}{N} Q_{nm} \log \frac{Q_{nm}}{(\sum_l Q_{lm})/N} \\
& = \inf_{Q \in \mathcal{Q}} \log N + \sum_{m=1}^M  Q_{1m} \log \frac{Q_{1m}}{\sum_l Q_{lm}},
\end{align*}
where the penultimate equality follows from the fact that the maximizer of 
$$
\max\limits_{P^\top \in \Delta_{N \times M}} \sum\limits_{n=1}^N \sum\limits_{m=1}^M  p_n Q_{nm} \log \frac{P_{nm}}{p_n}
$$
is given by (see \cite{BenTal1988})
$$
P_{nm} = \frac{Q_{nm} p_n}{ \sum\limits_{l=1}^N Q_{lm} p_l}
$$
and the last equality follows by symmetry of the channel -- each row is the same and the sum of entries in each column is the same.
\end{proof}
\subsection{Proposition~\ref{proposition.bsc}} \label{appendix.proofs.3}
\begin{proof}
To show that this is true, we need to give a set of primal and dual variables for the binary symmetric channel such that the objective function values of the primal and dual robust problems are equal.  This is achieved by the following set of values:
\begin{align*}
p_1^* = p_2^* = 1/2, \ \lambda_1^* = \lambda_2^* = \frac{\log 2}{2}, \ V_{11} = V_{22}^* = \frac{- \log(1 - \beta^*)}{2}, \ V_{12}^* = V_{21}^* = \frac{- \log \beta^*}{2}
\end{align*}
and
\begin{align*}
Q_{11}^* = Q_{22}^* = 1- \beta^*, \ Q_{12}^* = Q_{21}^* = \beta^*, \ u_1^* = u_2^* = 1/2.
\end{align*}
Inserting these values and noticing that:
\begin{align*}
\delta^*(V^*| \mathcal{Q}) \ & = \sup\limits_{\underline{\beta} \leq \beta \leq \overline{\beta}} \left\{ (1-\beta)(-\log(1-\beta^*)) + \beta(-\log \beta^*) \right\} \\
& = - \log(1-\beta^*) + \sup\limits_{\underline{\beta} \leq \beta \leq \overline{\beta}}  \beta \log \frac{1-\beta^*}{\beta*} \\
& = - \log(1-\beta^*) + \left\{ \begin{array}{ll} 0 & \text{ if } \overline{\beta} \geq 1/2 \\
\overline{\beta} \log \frac{1 - \overline{\beta}}{\overline{\beta}} & \text{ otherwise.} \end{array} \right.
\end{align*}
we conclude that the primal and dual objective values match indeed.
\end{proof}
\subsection{Proposition~\ref{proposition.Nemirovski.L.continuity}} \label{appendix.proofs.4}
\begin{proof}
Consider the entries of the Jacobian of $\Psi(Q,p)$ (Hessian of $f(Q,p)$). We have the following formulas, where on the right we write the upper bound on the absolute value of a given term:
\begin{align*}
\frac{\partial^2 f(Q,p)}{\partial Q_{nm} \partial p_{n^*}} & = \left\{ \begin{array}{llr} 
\log \frac{Q_{nm}}{\sum_{l=1}^N p_l Q_{lm}} + p_n (\sum_{l=1}^N p_l Q_{lm^*})  & \text{ if } n^* = n  &  \qquad 1- \log \tau \\
p_{n} (\sum_{l=1}^N p_l Q_{lm^*}) & \text{ if } n^* \neq n &  \qquad 1 \\
 \end{array}  \right. \\
\end{align*}
\begin{align*}
\frac{\partial^2 f(Q,p)}{\partial Q_{nm} \partial Q_{n^*m^*}} & = \left\{ \begin{array}{llr} 
\frac{p_n}{Q_{nm}} - \frac{p_n^2}{\sum_{l=1}^N p_l Q_{lm}}  & \text{ if } n^* = n, \ m^* = m &  \qquad \frac{2}{\tau} \\
 - \frac{p_n^2}{\sum_{l=1}^N p_l Q_{lm}}  & \text{ if } n^* = n, \ m^* \neq m &  \qquad \frac{1}{\tau}\\
 - \frac{p_n p_n^*}{\sum_{l=1}^N p_l Q_{lm}}  & \text{ if } n^* \neq n, \ m^* = m &  \qquad \frac{1}{\tau}\\
 0  & \text{ if } n^* \neq n, \ m^* \neq m &  \qquad 0 \\
 \end{array}  \right. \\
\end{align*}
\begin{align*}
\frac{\partial^2 f(Q,p)}{\partial p_{n} \partial p_{n^*}} & = \left\{ \begin{array}{llr} 
1 + \log Q_{jm^*}  - \log \left( \sum\limits_{l=1}^N p_l Q_{lm^*} \right) - \frac{p_{j} Q_{jm^*}}{\sum\limits_{l=1}^N p_l Q_{lm^*}}  & \text{ if } n^* = n &  \qquad 1 + \frac{1}{\tau} - \log \tau \\
- \frac{p_{j^*}Q_{nm}}{\sum_{l=1}^N p_l Q_{lm^*}} & \text{ if } n^* \neq n &  \qquad \frac{1}{\tau} \\
 \end{array}  \right. & 
\end{align*}
Due to the boundedness of the Jacobian, the mapping is Lipschitz continuous.
\end{proof}
\subsection{Lemma~\ref{lemma.g.lambda.infty}} \label{appendix.proofs.5}
\begin{proof}
Define a function $\underline{g}(\lambda)$:
$$
\underline{g}(\lambda) = \sum\limits_{n=1}^N \sum\limits_{m=1}^M \underline{p}_n Q_{nm} \log \frac{Q_{nm}}{\sum\limits_{l=1}^m \underline{p}_l Q_{lm}} + \lambda \left( b - \sum\limits_{n=1}^N \underline{p}_n a_n \right)
$$
where $\underline{p}_{n^*} = 1$ for $n^* = \min \{n: \ a_n = \min_n a_n \}$ and $\underline{p}_n = 0$ otherwise. It holds that $g(\lambda) \geq \underline{g}(\lambda)$ for all $\lambda \in \mathbb{R}$. Using Property~\ref{property.AMIM.bounds} we have:
$$
\underline{g}(\lambda) = \sum\limits_{n=1}^N \sum\limits_{m=1}^M \underline{p}_n Q_{nm} \log \frac{Q_{nm}}{\sum\limits_{l=1}^m \underline{p}_l Q_{lm}} + \lambda \left( b - \sum\limits_{j=1}^m \underline{p}_n a_n \right)  \geq 0 + \lambda \left( b - \min\limits_n a_n \right).
$$
By Assumption~\ref{assumption.strict.feasibility} we have $ b - \min_n a_n > 0$. In this way we obtain that $\underline{g}(\lambda)$ diverges to $+\infty$ as $\lambda \rightarrow + \infty$. Since $g(\lambda) \geq \underline{g}(\lambda)$, the claim follows.
\end{proof}
\subsection{Proposition~\ref{proposition.g.lambda.infty}} \label{appendix.proofs.6}
\begin{proof}
First, note that:
$$
\partial g(\lambda) \ni  \left\{ \left( b -  \sum\limits_{n=1}^N p_n a_n \right): p \in \arg\max\limits_{p \in \Delta_N} \sum\limits_{n=1}^N \sum\limits_{m=1}^M p_n Q_{nm} \log \frac{Q_{nm}}{\sum\limits_{l=1}^N p_l Q_{lm}} + \lambda \left( b - \sum\limits_{n=1}^N p_n a_n \right) \right\}
$$
By contradiction assume that for $\lambda > \log N / (b - \min_n a_n)$ it holds that all the maximizers $p$ are such that $b - \sum_{n=1}^N p_n a_n \leq 0$. For such such a $p$ we have the following bound on the value of $g(\lambda)$:
$$
g(\lambda) = \sum\limits_{n=1}^N \sum\limits_{m=1}^M p_n Q_{nm} \log \frac{Q_{nm}}{\sum\limits_{l=1}^N p_l Q_{lm}} + \lambda \left( b - \sum\limits_{n=1}^N p_n a_n \right) \leq \log N + 0 = \log N,
$$
where the last inequality follows from Property~\ref{property.AMIM.bounds}. Now, consider $\underline{p}_{n^*} = 1$ for $n^* = \min \{n: \ a_n = \min_n a_n \}$ and $\underline{p}_n = 0$ otherwise. In this case, we obtain:
\begin{align*}
g(\lambda) & = \sum\limits_{n=1}^N \sum\limits_{m=1}^M \underline{p}_n Q_{nm} \log \frac{Q_{nm}}{\sum\limits_{l=1}^m \underline{p}_l Q_{lm}} + \lambda \left( b - \sum\limits_{j=1}^m \underline{p}_n a_n \right)  \\
& \geq 0 + \lambda \left(b - \min_n a_n \right) \\
& > \frac{\log N}{b - \min\limits_n a_n} \left(b - \min_n a_n \right) \\
& = \log N,
\end{align*}
where the inequalities follow from Property~\ref{property.AMIM.bounds} and the assumption $\lambda > \log N / (b - \min_n a_n)$. We obtain a contradiction because there must be a maximizing $p$ such that $d = b - \sum_{n=1}^N p_n a_n > 0$.
\end{proof}
\section{Algorithm of \cite{Nemirovski2004paper} and its convergence} \label{appendix.Nemirovski}
In this Appendix we provide a self-contained version of the algorithm of \cite{Nemirovski2004paper} and the theorem stating its convergence. Consider the problem
\begin{align}
\min\limits_{x \in X} \max\limits_{y \in Y} \phi(x,y). \label{appendix.saddle.problem}
\end{align}
Define $z = (x,y)$, $Z = X\times Y$ in the Euclidean space and
\begin{align*}
\Phi(z) = \Phi(x,y) = \left[ \begin{array}{c} \frac{\partial \phi(x,y)}{\partial x} \\ - \frac{\partial \phi(x,y)}{\partial y} \end{array} \right].
\end{align*}
\begin{assumption}\label{appendix.assumption}
The sets $X$, $Y$ are convex and compact. It holds that $\phi(x,y) \in C^{1,1}$, i.e., $\phi(x,y)$ is differentiable and $\Phi(z)$ is Lipschitz continuous:
\begin{align*}
\left\| \Phi(z) - \Phi(z') \right\|_* \leq L \left\| z - z' \right\|,
\end{align*}
where $L > 0$ and $\| \cdot \|$ is a norm. There exist functions $\omega_1: X \rightarrow \mathbb{R}$, $\omega_2: Y \rightarrow \mathbb{R}$ that are $\alpha_1$ and $\alpha_2$-strongly convex:
\begin{align*}
\left\langle \omega_1'(x) - \omega_1'(x'), x - x' \right\rangle & \geq \alpha_1 \| x - x' \|_2^2 \quad \forall x,x' \in X \\
\left\langle \omega_2'(y) - \omega_2'(y'), y - y' \right\rangle & \geq \alpha_2 \| y - y' \|_2^2 \quad \forall y,y' \in Y
\end{align*} \hfill $\square$
\end{assumption}
Define 
\begin{align*}
\Theta_1 & = \max\limits_{x, x' \in X} \left\{ \omega_1(x) - \omega_1(x') - \langle x - x', \omega_1'(x') \rangle \right\} \\
\Theta_2 & = \max\limits_{y, y' \in Y} \left\{ \omega_2(y) - \omega_2(y') - \langle y - y', \omega_2'(y') \rangle \right\}. \\
\end{align*}
Let $\| \cdot \|_{(1)}$ and $\| \cdot \|_{(2)}$ be two norms and $L_{uv} > 0$, $u,v \in \{1,2\}$ be such that:
$$
L_{uv} \geq \frac{\left\| \nabla_{z_u} \phi(z) - \nabla_{z_u} \phi(z') \right\|_{(u)*}}{\left\| z_v - z'_v \right\|_{(v)}}, \forall z = (z_1,z_2) , \ z' = (z'_1,z'_2) \in \mathcal{B} \times \Delta_N, \ z_v \neq z'_v, \ z_u = z'_u.
$$
and define the distance generating function
$$
\omega(z) = \gamma_1 \omega_1 (x) + \gamma_2 \omega_2(y),
$$
where
$$
\gamma_1 = \frac{ \sum\limits_{l=1}^2 L_{1l} \sqrt{\frac{\Theta_1 \Theta_l}{\alpha_1 \alpha_l}} }{\Theta_1 \sum\limits_{k=1}^2 \sum\limits_{l=1}^2 L_{kl} \sqrt{\frac{\Theta_k \Theta_l}{\alpha_k \alpha_l}}}, \quad \gamma_2 = \frac{ \sum\limits_{l=1}^2 L_{2l} \sqrt{\frac{\Theta_2 \Theta_l}{\alpha_2 \alpha_l}} }{\Theta_2 \sum\limits_{k=1}^2 \sum\limits_{l=1}^2 L_{kl} \sqrt{\frac{\Theta_k \Theta_l}{\alpha_k \alpha_l}}}.
$$
In this setting, under a proper norm, the following is a Lipschitz constant $L$ of $\Phi(z)$ on $Z$:
$$
L = \sum\limits_{k=1}^2 \sum\limits_{l=1}^2 L_{kl} \sqrt{\frac{\Theta_k \Theta_l}{\alpha_k \alpha_l}},
$$
Define the proximal mapping:
$$
\text{Prox}_d(z) = \argmin\limits_{w \in Z} \left\{ \omega(w) + \langle \Phi(z) - \omega'(d),w \rangle   \right\}.
$$
In this setup, the algorithm is as follows.
\begin{enumerate}
\item Choose a starting point $z_0 \in Z$.
\item Given a $z^{t-1}$ check whether
$$
\text{Prox}_{z^{t-1}}(\Phi(z^{t-1})) = z^{t-1}.
$$
If this is the case, claim that $z^{t-1}$ is the solution to (\ref{appendix.saddle.problem}). If not, go to 3.
\item Choose $\gamma^t > 0$, set $w^{t,0} := z^{t-1}$ and run the iteration 
\begin{align*}
w^{t,s} = \text{Prox}_{z^{t-1}}(\gamma^t \Phi(w^{t,s-1}))
\end{align*}
until the condition
\begin{align*}
\langle \gamma^t \Phi(w^{t,s-1}), w^{t,s-1} - w^{t,s} \rangle + \omega(z^{t-1}) + \langle \omega'(z^{t-1}), w^{t,s} - z_{t-1} \rangle - \omega(w^{t,s}) \leq 0
\end{align*}
is met. Denote by $s_t$ the corresponding value of $s$ and set $z^t = w^{t,s_t-1}$ and $z_{t} = w^{t,s_t}$.
\end{enumerate}
In particular, when $\gamma^t \leq 1 / \sqrt{2} L$ the number of inner iterations is at most two. Convergence of the algorithm is stated in the following theorem, based on Proposition 2.2 in \cite{Nemirovski2004paper}.
\begin{theorem}
Assume Assumption~\ref{appendix.assumption} holds. Define
\begin{align*}
z^T = (x^T,y^T) = \frac{\sum\limits_{t=1}^T \gamma^t z^t}{ \sum\limits_{t=1}^T \gamma^t }
\end{align*}
Then, it holds that:
\begin{enumerate}
\item If the algorithm terminates at a certain step $T$ according to the rule in Step 1, then $z^{T-1}$ is a solution to (\ref{appendix.saddle.problem}).
\item If the algorithm does not terminate in the course of $T$ steps then 
$$
\max\limits_{y\in Y} \phi(x^T,y) - \min\limits_{x \in X} \phi(x,y^T) \leq \frac{1}{\sum\limits_{t=1}^T \gamma^t}.
$$
\end{enumerate}
\end{theorem}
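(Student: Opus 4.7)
The plan is to observe that the stopping condition $\text{Prox}_{z^{T-1}}(\Phi(z^{T-1})) = z^{T-1}$ says precisely that $z^{T-1}$ is the minimizer over $Z$ of the strongly convex function $w \mapsto \omega(w) + \langle \Phi(z^{T-1}) - \omega'(z^{T-1}), w\rangle$. Writing the first-order optimality condition, the $\omega'(z^{T-1})$ term cancels and one is left with $\langle \Phi(z^{T-1}), w - z^{T-1}\rangle \geq 0$ for every $w \in Z$. Because $\phi$ is convex-concave, this variational inequality is the standard characterization that $z^{T-1} = (x^{T-1}, y^{T-1})$ is a saddle point of $\phi$ on $X\times Y$, proving claim~1.

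\textbf{Part 2 (rate).} For the rate I would follow the standard mirror-prox Bregman-divergence analysis, with the termination rule of the inner loop playing the role that the Lipschitz descent estimate plays in the classical extragradient proof. Introduce the Bregman divergence $V(u,v) = \omega(v) - \omega(u) - \langle \omega'(u), v-u\rangle$, which by Assumption~\ref{appendix.assumption} is nonnegative and strongly convex in its second argument. Apply the first-order optimality condition of the prox step producing the terminal inner iterate $z_t := w^{t,s_t} = \text{Prox}_{z^{t-1}}(\gamma^t \Phi(w^t))$ with $w^t := w^{t,s_t-1}$: for every $w \in Z$,
$$\gamma^t \langle \Phi(w^t), z_t - w \rangle \;\leq\; V(z^{t-1}, w) - V(z_t, w) - V(z^{t-1}, z_t).$$
Adding and subtracting $\gamma^t \langle \Phi(w^t), w^t - z_t\rangle$ on the left rearranges this to
$$\gamma^t \langle \Phi(w^t), w^t - w\rangle \;\leq\; V(z^{t-1}, w) - V(z_t, w) + \bigl[\gamma^t \langle \Phi(w^t), w^t - z_t\rangle - V(z^{t-1}, z_t)\bigr].$$
Unfolding the definition of $V$ in the bracketed quantity, one recognizes it as precisely the left-hand side of the inner-loop termination criterion (\ref{equation.inner.iteration.condition}), hence $\leq 0$ by construction.

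Summing over $t=1,\dots,T$, cancelling the telescoping $V(z^{t-1},w) - V(z_t,w)$ (using compatibility of the inner terminal iterate with the next outer starting point) and discarding the nonnegative $V(z_T,w)$, one gets $\sum_{t=1}^T \gamma^t \langle \Phi(w^t), w^t - w\rangle \leq V(z^0, w)$, a quantity bounded by a constant depending only on $\omega$ and $z^0$ and normalized to $1$ in the stated bound via the choice of distance-generating function. To convert this residual bound into a duality gap, exploit convex-concavity of $\phi$: writing $w^t = (x^t,y^t)$ and $w=(x,y)$, the gradient inequalities give $\langle \Phi(w^t), w^t - w\rangle \geq \phi(x^t,y) - \phi(x,y^t)$. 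Jensen's inequality applied to the $\gamma^t$-weighted averages $x^T$ and $y^T$ then yields $\phi(x^T,y) - \phi(x,y^T) \leq 1/\sum_{t=1}^T \gamma^t$ for every $(x,y)\in X\times Y$, and maximizing over $y$ and minimizing over $x$ concludes the argument.

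\textbf{Main obstacle.} The delicate step is the identification of the bracketed mismatch term with the left-hand side of (\ref{equation.inner.iteration.condition}): that condition has been engineered precisely to force the required cancellation, but the sign bookkeeping and the exact matching of $\omega$-terms demand care, especially since $w^{t,s_t}$ appears both as the tail iterate inside the bracket and as the next outer starting point through the telescoping. A secondary obstacle, needed to make the rate meaningful rather than merely formal, is to verify that under $\gamma^t \leq 1/(\sqrt{2}L)$ the termination condition is met within two inner iterations; this follows from combining the Lipschitz continuity of $\Phi$ supplied by Proposition~\ref{proposition.Nemirovski.L.continuity} with the strong convexity of $\omega$, but the constant tracking is the part I would approach with the most caution.
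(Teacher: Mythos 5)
The paper does not prove this theorem at all: it is imported verbatim as Proposition~2.2 of \cite{Nemirovski2004paper}, and the appendix only restates the algorithm and the claim. Your sketch is a correct reconstruction of the standard mirror-prox argument, which is essentially Nemirovski's own proof, so there is no genuine divergence of method --- only the fact that you supply an argument where the paper supplies a citation. Two remarks on the step you rightly single out as delicate. First, your bracketed quantity $\gamma^t\langle\Phi(w^t),w^t-z_t\rangle - V(z^{t-1},z_t)$ expands to $\gamma^t\langle\Phi(w^t),w^t-z_t\rangle + \omega(z^{t-1}) + \langle\omega'(z^{t-1}),z_t-z^{t-1}\rangle - \omega(z_t)$ with $z_t=w^{t,s_t}$; this matches the termination condition as printed in the appendix (whose last term is $-\omega(w^{t,s})$), but \emph{not} the main-text condition (\ref{equation.inner.iteration.condition}) that you cite, whose last term is $-\omega(w^{t,k-1})$. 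The two printed versions are inconsistent; the appendix version is the one consistent with Nemirovski's analysis and with your cancellation, so your identification is right but you should cite the appendix form (or note the apparent typo in the main text). Second, your Part~1 and the Jensen/convex-concavity conversion are correct as stated (convexity in $x$ gives $\sum_t\lambda_t\phi(x^t,y)\ge\phi(x^T,y)$, concavity in $y$ the mirror inequality), and the constant $1$ in the numerator does come from the normalization $\gamma_1\Theta_1+\gamma_2\Theta_2=1$ forcing $V(z^0,w)\le 1$. The secondary obstacle you mention (at most two inner iterations when $\gamma^t\le 1/(\sqrt2 L)$) is not actually needed for the theorem as stated --- it only affects the per-iteration cost, not the bound in item~2 --- so you may safely omit it from the proof proper.
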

\end{appendix}
\end{document}